\newif\ifplain
\newif\ifieee
\newif\iflncs
\newenvironment{stdenum}{\begin{enumerate}}{\end{enumerate}}
\newenvironment{stdenum}{\begin{enumerate}}{\end{enumerate}}
\newenvironment{stdenum}{\begin{inparaenum}[(\itshape i\upshape)]}{\end{inparaenum}}
\newcommand{\exOnly}[1]{#1}
\newcommand{\shOnly}[1]{}
\newcommand{\exOnly}[1]{}
\newcommand{\shOnly}[1]{#1}
\newcommand{\fakepar}[1]{\vskip 2pt\textbf{#1}}
\newcommand{\CntMTL}[2]{\MTLoperator{\mathsf{K}}{#1}{#2}{}}
\newcommand{\nextMTL}[2]{\MTLoperator{\bigcirc}{#1}{}{#2}}
\tikzset{external/only named=true}
\newcommand{\etap}[1]{\ifthenelse{\equal{#1}{1}} {\eta^{\ref{eq:normalform}}} {\ifthenelse{\equal{#1}{2}} {\eta^{\ref{eq:normalform-G}}} {\eta^{\ref{eq:normalform-X}}}}}
\DeclareRobustCommand{\form}[1]%
{\ensuremath{%
	 \ifthenelse{ \not \equal{#1}{} } {\mathtt{L}\!\left(#1\right)} {\mathtt{L}} }
}%
\DeclareRobustCommand{\lang}[1]%
{\ensuremath{%
	 \ifthenelse{ \not \equal{#1}{} } {\mathbf{L}\!\left(#1\right)} {\mathbf{L}} }
}%
\DeclareRobustCommand{\formk}[2]{\ifthenelse{ \not \equal{#1}{} \AND \not \equal{#2}{} }{\form{\mathsf{U}^{#1}, \mathsf{X}^{#2}}}{\ifthenelse{ \not \equal{#1}{} }{\form{\mathsf{U}^{#1}, \mathsf{X}}}{\ifthenelse{ \not \equal{#2}{} }{\form{\mathsf{U}, \mathsf{X}^{#2}}}{\form{\mathsf{U}, \mathsf{X}}}}}}
\DeclareRobustCommand{\langk}[2]{\ifthenelse{ \not \equal{#1}{} \AND \not \equal{#2}{} }{\lang{\mathsf{U}^{#1}, \mathsf{X}^{#2}}}{\ifthenelse{ \not \equal{#1}{} }{\lang{\mathsf{U}^{#1}, \mathsf{X}}}{\ifthenelse{ \not \equal{#2}{} }{\lang{\mathsf{U}, \mathsf{X}^{#2}}}{\lang{\mathsf{U}, \mathsf{X}}}}}}
\newcommand{\logtrue}{\top}
\newcommand{\logfalse}{\bot}
\newcommand{\limpl}{\Rightarrow}
\newcommand{\liff}{\Leftrightarrow}
\newcommand{\Pcal}{\mathcal{P}}
\newcommand{\Fcal}{\mathcal{F}}
\newcommand{\Orm}{\mathrm{O}}
\newcommand{\orm}{\mathrm{o}}
\newcommand{\allmodels}[2]{\ensuremath{\mathcal{B}{#1}\ifthenelse{\not \equal{#2}{}}{(#2)}{}}}
\newcommand{\alltw}[1]{\allmodels{\timedomain}{#1}}
\newtheorem{Definition}{Definition}
\theoremstyle{plain}
\newtheorem{theorem}{Theorem}
\newtheorem{lemma}[theorem]{Lemma}
\newtheorem{corollary}[theorem]{Corollary}
\theoremstyle{definition}
\newtheorem{definition}[theorem]{Definition}
\newtheorem{remark}[theorem]{Remark}
\newcommand{\bw}[3]{\ensuremath{\mathcal{B}{#1}\ifthenelse{\not \equal{#2}{}}{[#2]}{}}\ifthenelse{\not \equal{#3}{}}{(#3)}{}}
\newcommand{\bwc}[2]{\bw{\reals_{\geq 0}}{#1}{#2}}
\newcommand{\bwt}[2]{\bw{\timedomain}{#1}{#2}}
\newcommand{\BV}[2]{\ensuremath{\mathit{BV}_{#1}\ifthenelse{\not \equal{#2}{} }{(#2)}{}}}
\newcommand{\cBV}[2]{\ensuremath{\mathit{\overline{BV}}_{#1}\ifthenelse{\not \equal{#2}{} }{(#2)}{}}}
\newcommand{\BVd}[1]{\BV{\naturals}{#1}}
\newcommand{\BVc}[1]{\BV{\reals_{\geq 0}}{#1}}
\newcommand{\BVt}[1]{\BV{\timedomain}{#1}}
\newcommand{\cBVd}[1]{\cBV{\naturals}{#1}}
\newcommand{\cBVc}[1]{\cBV{\reals_{\geq 0}}{#1}}
\newcommand{\cBVt}[1]{\cBV{\timedomain}{#1}}
\newcommand{\RE}{\languageclass{\textup{RE}}{}{}{}{}}
\newcommand{\cRE}{\languageclass{\textup{coRE}}{}{}{}{}}
\newcommand{\expspace}{\languageclass{\textup{EXPSPACE}}{}{}{}{}}
\newcommand{\exptime}{\languageclass{\textup{EXP}}{}{}{}{}}
\newcommand{\np}{\languageclass{\textup{NP}}{}{}{}{}}
\newcommand{\frag}[4]{\ensuremath{\mathrm{#1}\ifthenelse{\not \equal{#2}{} }{^{#2}}{}\ifthenelse{\not \equal{#3}{} }{_{#3}}{}\ifthenelse{\not \equal{#4}{} }{\left[#4\right]}{}}}
\newcommand{\mtlFX}[1]{\frag{\Fcal}{+}{\diamondMTL{}{}, \nextMTL{}{}}{}}
\newcommand{\mtlGX}[1]{\frag{\Fcal}{+}{\boxMTL{}{}, \nextMTL{}{}}{}}
\newcommand{\ltlFX}[1]{\frag{\mathcal{L}}{+}{\Fmtl{}{}, \Xmtl{}{}}{}}
\newcommand{\auntilMTL}[2]{\MTLoperator{\widehat{\mathsf{U}}}{#1}{}{#2}}
\newcommand{\pwm}{\models_{\textsf{p}}}
\newcommand{\cm}{\models_{\textsf{c}}}
\newcommand{\noop}{\epsilon}
\newcommand{\ob}{\overline{b}}
\date{10 June 2013\footnote{Document last updated on 4 July 2014.}}
\title{\textsc{Bounded Variability of \\Metric Temporal Logic}}
\author{Carlo A. Furia \\
Department of Computer Science, ETH Zurich, Switzerland\\
\url{bugcounting.net} \\
Paola Spoletini \\
Universit{\`a} degli Studi dell'Insubria, Italy \\
\url{paola.spoletini@uninsubria.it}
}
\title{Bounded Variability of Metric Temporal Logic}
\author{%
  \IEEEauthorblockN{Carlo A.\ Furia}
  \IEEEauthorblockA{Department of Computer Science,  ETH Zurich, Switzerland\\
  \url{caf@inf.ethz.ch}
  }
\and
  \IEEEauthorblockN{Paola Spoletini}
  \IEEEauthorblockA{Universit{\`a} degli Studi dell'Insubria, Italy\\
  \url{paola.spoletini@uninsubria.it}
  }
}
\title{Bounded Variability of Metric Temporal Logic}
\author{Carlo A.\ Furia\inst{1} \and Paola Spoletini\inst{2}}
\institute{%
Department of Computer Science, ETH Zurich, Switzerland $\quad$ \email{caf@inf.ethz.ch} 
\and
Universit{\`a} degli Studi dell'Insubria, Italy $\quad$ \email{paola.spoletini@uninsubria.it}
}
\begin{document}

\maketitle

\begin{abstract}
Previous work has shown that reasoning with real-time temporal logics is often simpler when restricted to models with \emph{bounded variability}---where no more than $v$ events may occur every $V$ time units, for given $v, V$.
When reasoning about formulas with \emph{intrinsic} bounded variability, one can employ the simpler techniques that rely on bounded variability, without any loss of generality.
What is then the complexity of algorithmically \emph{deciding} which formulas have intrinsic bounded variability?

In this paper, we study the problem with reference to Metric Temporal Logic (MTL).
We prove that deciding bounded variability of MTL formulas is undecidable over dense-time models, but with a undecidability degree lower than generic dense-time MTL satisfiability.
Over discrete-time models, instead, deciding MTL bounded variability has the same exponential-space complexity as satisfiability.
To complement these negative results, we also briefly discuss small fragments of MTL that are more amenable to reasoning about bounded variability. 
\end{abstract}

\section{The Benefits of Bounding Variability} \label{sec:intro}
In yet another instance of the principle that ``there ain't no such thing as a free lunch'', expressiveness of formal languages comes with a significant cost to pay in terms of complexity---and possibly undecidability---of algorithmic analysis.
The trade-off between expressiveness and complexity is particularly critical for the real-time temporal logics, which dwell on the border of intractability.
A chief research challenge is, therefore, identifying expressive temporal logic fragments without letting the ``dark side'' of undecidability~\cite{DBLP:conf/time/BresolinMGMS11} prevail and abate practical usability.

Previous work by us~\cite{FR08-FORMATS08,FS11-TIME11} and others~\cite{Wil94,DBLP:journals/fac/Franzle04} has shown that the notion of \emph{bounded variability} can help tame the complexity of real-time logics while still retaining a reasonable expressive power.
A model has variability bounded by $v/V$ if there are at most $v$ events every $V$ time units.
Consider a temporal logic formula $\phi$: deciding whether $\phi$ has a model with variability bounded by some $v/V$ is typically simpler than the more general problem of deciding whether $\phi$ has a model of \emph{any} (possibly unbounded) variability (see Section~\ref{sec:related} for examples).
To close the gap between decidability over bounded variably models and general models, we should be able to determine if $\phi$ \emph{only} has models with bounded variability.
When this is the case, we lift the notion of bounded variability from models to formulas and say that $\phi$ has bounded variability.
For formulas with bounded variability, we can apply the simpler algorithms that only consider bounded variably models, without losing generality in the analysis.

As a simple concrete example, if $\phi$ is the specification of a square wave of period $10$ and duty cycle 30\% %
(\begin{tikzpicture}[x=3pt,y=7pt,thick,xscale=1,yscale=1,baseline={(B.base)}]
\draw (-2,0) -- (0,0) -- (0,1) -- (3,1) -- (3,0) -- (10,0) -- (10,1) -- (13,1) -- (13,0) -- (15,0);
\draw [dotted] (15,0) -- (17,0);
\draw [dotted] (-2,0) -- (-4,0);
\draw [thin,<->,blue] (0,-0.5) -- (10,-0.5);
\node at (-1,-0.5) {\scriptsize 0};
\node at (11,-0.5) {\scriptsize 10};
\node (B) at (10,-0.5) {};
\end{tikzpicture}), 
there are at most three transition events every 10 time units.
Thus, all models of $\phi$ have variability bounded by $3/10$, and we can leverage this fact to simplify the algorithmic analysis of $\phi$.

This paper targets the bounded variability of formulas written in Metric Temporal Logic (MTL)~\cite{Koy90}, a popular linear-time temporal logic which extends LTL~\cite{Eme90} with metric constraints and can be interpreted over both dense and discrete time domains.
We study the complexity of the problem of determining if a generic MTL formula $\phi$ has bounded variability.

The bulk of the results is bad news: over dense-time models, deciding whether an MTL formula has bounded variability is undecidable; over discrete-time models, it is decidable, but with the same complexity as deciding validity\footnote{Since MTL formulas are obviously closed under negation, validity and satisfiability are dual problems with the same complexity. Therefore, we indifferently use either term with reference to complexity.} in general.
These results are major hurdles to pursuing the idea of identifying formulas with bounded variability and then using the simpler algorithms for satisfiability on them: the complexity of the first step dominates, and nullifies the benefits of using the simplified algorithms for satisfiability under bounded variability.

As we show in Section~\ref{sec:compl-bound-vari-ct} using reductions from undecidable problems of nondeterministic counter machines, the undecidability degree of deciding bounded variability over dense time is still lower than that of deciding validity: the former occupies the first two levels of the arithmetical hierarchy, whereas the latter belongs to $\Sigma_1^1$, the second level of the analytical hierarchy.
In contrast, deciding bounded variability over discrete time is \expspace-complete (see Section~\ref{sec:compl-bound-vari-dt}), the very same complexity as deciding validity; but the discreteness of the time domain entails that every formula has bounded variability for $v = V$.

While these results imply strong limits to reasoning about bounded variability in general, Section~\ref{sec:bound-vari-simple} suggests simpler cases where this may still be possible.
If we identify MTL fragments that are sufficiently expressive to encode the requirement of bounded variability, yet have low complexity, we can try to establish bounded variability in special cases by considering subformulas of generic MTL formulas.
We briefly illustrate two fragments, one for discrete- and one for dense-time models, that meet these requirements.

\ifplain
\subsection{Related Work} \label{sec:related}
\else
\fakepar{Related Work.} \label{sec:related}
\fi
Originally introduced by Koymans~\cite{Koy90} as a first-order real-time logic, MTL has become widespread in the propositional version popularized by Alur and Henzinger~\cite{AH93}.
Their seminal work has also studied its complexity over dense and discrete time~\cite{AH93,AH94}, as well as interesting decidable fragments for dense time~\cite{AFH96}.
While their work basically settled the problems for discrete time, follow-up work by other authors has extended and refined the picture for dense time, such as by studying expressive completeness~\cite{HR04,HOW13}, simplifying decision procedures~\cite{MNP05}, or identifying expressive decidable fragments~\cite{DBLP:conf/lics/BouyerMOW07,DBLP:journals/lmcs/OuaknineW07,DBLP:conf/formats/OuaknineW08}.

Bounded variability is a natural semantic restriction over dense time, which has been applied to various formalisms including timed automata~\cite{Wil94}, duration calculus~\cite{DBLP:journals/fac/Franzle04}, and, in our previous work, MTL~\cite{FR08-FORMATS08}.
Recently, we also applied it to LTL over discrete time; while it is obvious that every discrete-time model has bounded variability (given by the fixed duration associated with one discrete time unit), in~\cite{FS11-TIME11,FS12-TIME12} we showed how the LTL validity problem can be simplified under the assumption that only $v < V$ change events happen every $V$ discrete time steps.
Therefore, bounded variability can be a simplifying assumption also for discrete time.

The undecidability results of Section~\ref{sec:compl-bound-vari-ct} use reductions from undecidable problems of nondeterministic $n$-counter machines, which we introduce in Section~\ref{sec:counter-machines}.
These are a kind of Minsky's counter machines~\cite{Min67}; their connection with MTL was first exploited by Alur and Henzinger~\cite{AH93}.

Section~\ref{sec:bound-vari-simple} discusses MTL fragments with lower complexity.
Over discrete time, these fragments can be derived from similarly low-complexity fragment of LTL~\cite{Eme90}, which have been extensively studied by several authors~\cite{DBLP:journals/jacm/SistlaC85,DBLP:journals/iandc/DemriS02,LWW07}.


\section{Timed Words and Variability} \label{sec:words}
We denote a generic time domain by $\timedomain$.
In the paper, $\timedomain$ is either the \emph{discrete} set of the nonnegative integers $\naturals$, or the \emph{dense} (and \emph{continuous}) set of the nonnegative reals $\reals_{\geq 0}$.

An \emph{interval} is a convex subset of the time domain, represented by a pair $\langle a, b \rangle$, where $\langle$ and $\rangle$ are square or round brackets to respectively denote inclusion or exclusion of the endpoint.
We use the pseudo-arithmetic expressions $> s$, $\geq s$, $< s$, $\leq s$, and $= s$ as abbreviations for the intervals $(s, \infty)$, $[s, \infty)$, $[0, s)$, $[0, s)$ and $[s,s]$.
We assume a binary encoding of constants in the time domain unless explicitly stated otherwise.

Given a time domain $\timedomain$ and a finite alphabet set $\Pcal$ of atomic propositions, a \emph{timed word} over $\timedomain$ is a countably infinite sequence of pairs $\omega = (\sigma_0, t_0)\,(\sigma_1, t_1)\ifieee\else\linebreak\fi(\sigma_2, t_2) \cdots$ such that:
\begin{stdenum}
\item 
Each integer $k \geq 0$ denotes a \emph{position} in a timed word;
\item 
For each $k$, $\sigma_k$ is a (nonempty, w.l.o.g.) subset of $\Pcal$ denoting the propositions holding at position $k$; and $t_k \in \timedomain$ is a timestamp denoting the time of the occurrence at position $k$;
\item
The timestamps are strictly monotonic, that is $t_h > t_k$ iff $h > k$, and diverging, that is for all $t \in \timedomain$ there exists $k$ such that $t_k > t$ (divergence is subsumed by monotonicity in discrete time).
\end{stdenum}
We also conventionally assume that $t_0 = 0$.
The set of all timed words over $\timedomain$ is denoted by $\alltw{}$.

A timed word $\omega$ has \emph{variability bounded by $v/V$}, for $V \in \timedomain$ and $v \in \naturals$, iff it has no more than $v$ positions within any closed time interval of length $V$: for all $k\in \naturals$, $t_{k+v}-t_{k} > V$.
The set of all timed words over $\timedomain$ with variability bounded by $v/V$ is denoted by $\bwt{v/V}{}$.

\section{MTL: Metric Temporal Logic} \label{sec:mtl}
We present the syntax and semantics of propositional MTL and recall some fundamental facts about its complexity. 

\fakepar{Syntax.}
MTL formulas are defined by the grammar
\iflncs
$
\phi ::=
  \logtrue
  \mid p
  \mid \neg \phi_1 
  \mid \phi_1 \land \phi_2 
  \mid \untilMTL{J}{\phi_1, \phi_2}
$, 
\else
\begin{equation*}
\phi \quad::=\quad
  \logtrue
  \,\mid\, p
  \,\mid\, \neg \phi_1 
  \,\mid\, \phi_1 \land \phi_2 
  \,\mid\, \untilMTL{J}{\phi_1, \phi_2}
,
\end{equation*}
\fi
where $p$ ranges over the alphabet $\Pcal$, and $J$ is an interval of the time domain with integer endpoints.
We assume the standard definitions for \emph{false}: $\logfalse$, and for the derived Boolean connectives: $\lor$, $\limpl$, and $\liff$.
The symbol $\alpha$ abbreviates the formula $\bigvee_{p \in \Pcal} p$, which holds iff some proposition holds.
We introduce the derived temporal operators \emph{eventually}: $\diamondMTL{J}{\phi} = \untilMTL{J}{\logtrue, \phi}$; \emph{globally} (also, \emph{always}): $\boxMTL{J}{\phi} = \neg \diamondMTL{J}{\neg \phi}$; \emph{action until}: $\auntilMTL{J}{\phi_1, \phi_2} = \untilMTL{J}{\alpha \limpl \phi_1, \phi_2}$; and \emph{next}: $\nextMTL{J}{\phi} = \auntilMTL{J}{\logfalse, \phi}$.
Operator precedence is: $\neg$ has the highest precedence, then $\land$, then $\lor$, then $\limpl$, then all temporal operators, and finally $\liff$.
We may omit the parentheses around arguments when unambiguous, and drop intervals $[0, \infty)$.

\fakepar{Semantics.}
Given a timed word $\omega = (\sigma_0, t_0)\,(\sigma_1, t_1)\cdots$ and a position $k \in \naturals$, the \emph{pointwise} satisfaction relation $\pwm$ for an MTL formula $\phi$ is inductively defined as follows:

\ifieee\setlength{\tabcolsep}{5pt}\else\begin{center}\fi
\begin{tabular}{lcl}
$\omega, k \pwm \logtrue$$;$\\
$\omega, k \pwm p$  & iff & 
  $p \in \sigma_k$ $\,;$ \\
$\omega, k \pwm \neg \phi_1$ & iff &
  $\omega, k \not\pwm \phi_1$$;$ \\
$\omega, k \pwm \phi_1 \land \phi_2$ & iff &
  $\omega, k \pwm \phi_1$ and $\omega, k \pwm \phi_2$$;$\\
\ifieee
$\omega, k \pwm \untilMTL{J}{\phi_1, \phi_2}$ & iff & 
  there exists $h > k$ such that: \\
  && $t_h - t_k \in J$, $\omega, h \pwm \phi_2$, and, \\
  && for all $k < x < h$, $\omega, x \pwm \phi_1$$;$ \\
\else
$\omega, k \pwm \untilMTL{J}{\phi_1, \phi_2}$ & iff & 
  there exists $h > k$ such that: $t_h - t_k \in J$, \\
  && $\omega, h \pwm \phi_2$, and, for all $k < x < h$, $\omega, x \pwm \phi_1$$;$ \\
\fi
$\omega \pwm \phi$ & iff & $\omega, 0 \pwm \phi$$\,.$
\end{tabular}
\ifieee\else\end{center}\fi

\noindent
The semantics of $\untilMTL{}{}$ and $\auntilMTL{}{}$ coincide under the pointwise semantics, since formulas are only evaluated at positions, where $\alpha$ invariably holds.
The (derived) semantics of next is: $\omega, k \pwm \nextMTL{J}{\phi_1}$ iff $t_{k+1} - t_k \in J$ and $\omega, k+1 \pwm \phi_1$; that is, the next position has timestamp in $J$ relative to the current one, and $\phi_1$ holds there.

Given a timed word $\omega$ as above and a time instant $t \in \timedomain$, the \emph{continuous} satisfaction relation $\cm$ for an MTL formula $\phi$ is inductively defined as follows
\ifplain\else(we only list the cases that differ from the pointwise semantics)\fi:

\ifieee
\setlength{\tabcolsep}{6pt}
\begin{tabular}{lcl}
$\omega, t \cm p$  & iff & 
  there exists $k \in \naturals$ such that: \\
  && $t_k = t$ and $p \in \sigma_k$$;$ \\
$\omega, t \cm \untilMTL{J}{\phi_1, \phi_2}$ & iff & 
  there exists $u > t$ such that: \\
&& $u - t \in J$, $\omega, u \cm \phi_2$, and, \\
&& for all $t < v < u$, $\omega, v \cm \phi_1$$.$ \\
\end{tabular}
\else
\begin{center}
\begin{tabular}{lcl}
\ifplain
$\omega, t \cm \logtrue$$;$\\
\fi
$\omega, t \cm p$  & iff & 
  there exists $k \in \naturals$ such that: 
  $t_k = t$ and $p \in \sigma_k$$;$ \\
\ifplain
$\omega, t \cm \neg \phi_1$ & iff &
  $\omega, t \not\cm \phi_1$$;$ \\
\fi
\ifplain
$\omega, t \cm \phi_1 \land \phi_2$ & iff &
  $\omega, t \cm \phi_1$ and $\omega, t \cm \phi_2$$;$\\
\fi
$\omega, t \cm \untilMTL{J}{\phi_1, \phi_2}$ & iff & 
  there exists $u > t$ such that: 
$u - t \in J$,  \\
&&$\omega, u \cm \phi_2$, and, for all $t < v < u$, $\omega, v \cm \phi_1$$;$ \\
\ifplain
$\omega \cm \phi$ & iff & $\omega, 0 \cm \phi$$\,.$
\fi
\end{tabular}
\end{center}
\fi

\noindent
Over dense time, the continuous semantics generalizes the pointwise semantics in the sense that the former is strictly more expressive~\cite{DSouzaP07,DBLP:conf/lics/BouyerMOW07}.
The semantics of next under continuous semantics is, however, analogous to that over the pointwise semantics, thanks to the usage of $\auntilMTL{}{}$ in its definition.

\begin{remark}\label{rem:semantics-used}
In the following, we assume the pointwise semantics over discrete time $\naturals$, and the continuous semantics over dense time $\reals_{\geq 0}$.
Our results for dense time are also transferable to the pointwise semantics \emph{mutatis mutandis}, provided \emph{past operators} are available (see Section~\ref{sec:other-semantics}).
\end{remark}

\fakepar{Complexity: general models.}
Satisfiability of MTL formulas is highly undecidable over dense time, where it is $\Sigma_1^1$-hard~\cite{AH93}.
It is instead decidable over discrete time, with an $\expspace$-complete decidability problem~\cite{AH93} (which translates to doubly-exponential deterministic time).
Over discrete time, the high complexity is essentially due to the succinctness of the binary encoding (the expressiveness is the same as LTL).

\fakepar{Complexity: bounded models.}
Bounded variability is a semantic restriction that reduces the complexity of MTL.
In fact, we proved that satisfiability of MTL over dense-time models with variability bounded by $v/V$, for any given $v/V$, is \expspace-complete~\cite{FR08-FORMATS08}, matching the complexity of MTL over discrete time, as well as that of other decidable dense-time logics~\cite{AFH96,HR04}.
The following is a corollary of our previous results, which we use in this paper.
\begin{corollary}\label{prop:MTL-bounded-sat}
For any $v$, $v'$, and $V$, it is decidable whether an MTL formula has some model over $\reals_{\geq 0}$ with variability bounded by $v'/V$ but not by $v/V$.
\end{corollary}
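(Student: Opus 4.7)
The plan is to reduce the question to MTL satisfiability under a bounded-variability constraint, which is decidable (in fact \expspace-complete) by our previous work~\cite{FR08-FORMATS08}. Concretely, I would construct an MTL formula $\chi_{v,V}$, over an alphabet enriched with $v{+}1$ fresh propositions, expressing the semantic property ``the model admits some window of length $V$ containing at least $v{+}1$ events''. Then $\phi$ has a model in $\bwc{v'/V}{} \setminus \bwc{v/V}{}$ iff $\phi \land \chi_{v,V}$ has a model in $\bwc{v'/V}{}$, and the latter question is decidable by the cited result.

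For the encoding, I would first dispose of the trivial case $v' \leq v$, in which $\bwc{v'/V}{} \subseteq \bwc{v/V}{}$ and the answer is ``no''. For the nontrivial case $v' > v$, let $m_0, \ldots, m_v$ be propositions not occurring in $\phi$, and let $\alpha_{\Pcal} = \bigvee_{p \in \Pcal} p$. The formula $\chi_{v,V}$ would conjoin:
\begin{itemize}
\item $\diamondMTL{}{m_0}$ (a witness position is marked $m_0$);
\item $\boxMTL{}{m_i \limpl \alpha_{\Pcal}}$ for each $i \leq v$ (markers coincide with original events, so attaching or dropping them neither adds nor removes positions);
\item $\boxMTL{}{m_i \limpl \boxMTL{(0,\infty)}{\neg m_i}}$ for each $i \leq v$ (each marker occurs at a unique position);
\item $\boxMTL{}{m_i \limpl \nextMTL{}{m_{i+1}}}$ for $i < v$ (markers sit on $v{+}1$ successive events);
\item $\boxMTL{}{m_0 \limpl \diamondMTL{(0,V]}{m_v}}$ (the span from $m_0$ to $m_v$ fits in $V$ time units).
\end{itemize}

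I would then verify the reduction in both directions. Given $\omega \in \bwc{v'/V}{} \setminus \bwc{v/V}{}$ with $\omega \models \phi$, I pick $k$ with $t_{k+v} - t_k \leq V$ and label position $k{+}i$ with $m_i$; the resulting extended word has the same timestamps as $\omega$, so it still lies in $\bwc{v'/V}{}$ and satisfies $\phi \land \chi_{v,V}$. Conversely, in any model of $\phi \land \chi_{v,V}$ inside $\bwc{v'/V}{}$, uniqueness together with the $\nextMTL{}{}$-chain forces the lone $m_v$ to appear at the event $v$ positions after the lone $m_0$, and the final conjunct then gives $t_{k+v} - t_k \leq V$; projection onto $\Pcal$ preserves all positions (by $m_i \limpl \alpha_{\Pcal}$) and thus preserves both $\phi$-satisfaction and the $v/V$-violation.

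The hard part is getting the encoding $\chi_{v,V}$ right, since MTL cannot arithmetically ``count'' events inside a sliding window: the witnessing sequence must be externalized via auxiliary propositions whose behaviour is tightly controlled. Without the ``coincides with an original event'' and ``unique marker'' conjuncts, a model of $\chi_{v,V}$ could either plant markers on brand-new timestamps that inflate variability, or satisfy the diamond by an $m_v$ unrelated to the chain beginning at $m_0$---either way breaking the equivalence. Once those housekeeping constraints are in place, the algorithm simply runs the bounded-variability satisfiability procedure of~\cite{FR08-FORMATS08} on $\phi \land \chi_{v,V}$ with bound $v'/V$.
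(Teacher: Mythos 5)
Your proposal is correct and follows essentially the same route as the paper: reduce the question to MTL satisfiability over $\bwc{v'/V}{}$ (decidable by \cite{FR08-FORMATS08}) of $\phi$ conjoined with an MTL encoding of the \emph{complement} of the $v/V$ bound. The only difference is that you construct that complement encoding explicitly with fresh marker propositions, whereas the paper simply cites it from \cite[Section~4.3]{FR08-FORMATS08}; your construction (in particular the ``marker coincides with an original event'' and uniqueness constraints, and the dispatch of the trivial case $v' \leq v$) is sound.
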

\begin{proof}
We showed that the MTL satisfiability problem over \bwc{v/V}{} is decidable for generic $v/V$~\cite[Corollary~1]{FR08-FORMATS08}; and that we can encode in MTL the bounded variability constrain as well as its complement~\cite[Section~4.3]{FR08-FORMATS08}.
\iflncs\qed\else\qedhere\fi\end{proof}

In recent work~\cite{FS11-TIME11,FS12-TIME12}, we showed how the notion of bounded variability can reduce the complexity of MTL over discrete time as well.
While bounded variability does not affect the exponential-space worst-case complexity, since discrete-time models have inherently bounded variability, it can reduce the complexity in practice.
Precisely, when studying the variability of an arbitrary LTL formula $\phi$ over behaviors with variability bounded by any given $v/V$, with $v < V$, we can consider a simplified $\phi'$ whose size depends on $v$ but not on the distances encoded in $\phi$ through next operators.
While the results of \cite{FS11-TIME11,FS12-TIME12} target LTL, it is clear that they carry over to MTL over discrete time. 

\section{Counter Machines} \label{sec:counter-machines}
Counter machines~\cite{Min67} are powerful computational devices, widely used in formal language theory.
We use a nondeterministic version of counter machines, and derive some complexity results which we use in the remainder.

\iflncs\begin{Definition}\else\begin{definition}\fi
An $n$-counter machine executes programs consisting of a finite list of instructions with labels $\ell_0, \ell_1, \ldots$ and operating on $n$ integer counter variables $v_0, \ldots, \iflncs\linebreak\fi v_{n-1}$.
An instruction is one of the following:

\begin{center}
\setlength{\tabcolsep}{7pt}
\begin{tabular}{l l}
\lstinline|halt| & terminate computation \\
\lstinline|if $v_k > 0$ goto $\ell_i$, $\ell_j$| & conditional branch \\
\lstinline|inc $v_k$| &  increment counter \\
\lstinline|dec $v_k$| &  decrement counter
\end{tabular}
\end{center}

\noindent
where the conditional branch consists in jumping to $\ell_i$ or $\ell_j$ nondeterministically if counter $v_k$ is non-zero; and decrementing a counter with zero value is undefined.
Computations start at location $\ell_0$ with all counters equal to zero and proceed according to the obvious semantics of instructions.
Without loss of generality, assume that instruction \lstinline|halt| occurs exactly once and that the last instruction in the list is either \lstinline|halt| or a branch.
\iflncs\end{Definition}\else\end{definition}\fi

For $n$-counter machines, with $n \geq 2$, the halting problem (deciding whether the location with \lstinline|halt| is visited in some computation) is $\Sigma_1^0$-complete (\RE-complete: undecidable but semidecidable); 
the non-halting problem (deciding whether some computation does not halt) is $\Sigma_2^0$-complete;
the recurring computation problem (deciding whether location $\ell_0$ is visited infinitely often in some computation) is $\Sigma_1^1$-hard~\cite{AH94}.\exOnly{\footnote{\cite{AH94} discusses 2-counter machines, but the generalization to $n$-counter machines is immediate. The other complexities follow from reduction of the same problems for Turing machines.}}\shOnly{\footnote{These complexities easily follow from reduction of the same problems for Turing machines.}}

\subsection{Bounded and Unbounded Counters}\label{sec:bound-unbo-count}
Consider the following decision problems for $n$-counter machines:
\ifieee
\begin{description}[\IEEEsetlabelwidth{unbounded counter: }]
\else
\begin{description}
\fi
\item[\textbf{bounded counter}:] given an integer $\beta$, decide whether $v_0$ overflows $\beta$ in some computation;
\item[\textbf{finite counter}:] decide whether there exists $\beta$ such that $v_0 \leq \beta$ in all computations;
\exOnly{\item[\textbf{unbounded counter}:] decide whether $v_0$ is incremented infinitely often in some computation.}
\end{description}

\begin{theorem}\label{th:counter-problems}
The bounded counter problem is $\Sigma_1^0$-complete;
the finite counter problem is $\Sigma_2^0$-complete\exOnly{;
the unbounded counter problem is $\Sigma_1^1$-hard}.
\end{theorem}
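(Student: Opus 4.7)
The plan is to establish the three statements separately, each time combining a quantifier-based upper bound with a reduction from a suitably complete problem.

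For the bounded counter problem, membership in $\Sigma_1^0$ is immediate: a breadth-first enumeration of all finite computation prefixes terminates precisely when some prefix exhibits $v_0 > \beta$. For $\Sigma_1^0$-hardness I would reduce from the halting problem: given an $n$-counter machine $M$, construct $M'$ by adding a fresh counter $v_0$ and replacing the unique \lstinline|halt| instruction by the two lines \lstinline|inc v_0|; \lstinline|halt|. Paired with $\beta = 0$, $v_0$ becomes positive in some computation of $M'$ iff $M$ halts.

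For the finite counter problem, the inner predicate ``in every reachable configuration, $v_0 \leq \beta$'' is the complement of a bounded-counter instance and hence lies in $\Pi_1^0$; existentially quantifying over $\beta$ places the whole problem in $\Sigma_2^0$. For $\Sigma_2^0$-hardness I would reduce from the classical $\Sigma_2^0$-complete index set $\mathrm{Fin} = \{e : W_e \text{ is finite}\}$. Given $e$, construct $M'$ that (i) nondeterministically increments an auxiliary counter $a$ to some value $i$ via a ``coin flip'' branch on a counter pre-set to $1$; (ii) simulates $\varphi_e$ on input $i$ using Minsky's embedding of Turing machines into counter machines, leaving $v_0$ untouched; and (iii) if and only if the simulation halts, transfers the value of $a$ into $v_0$ and halts. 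The set of values assumed by $v_0$ across all computations of $M'$ is bounded iff $W_e$ is bounded, and for a subset of $\mathbb{N}$ boundedness coincides with finiteness, which yields the reduction.

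For the unbounded counter problem, $\Sigma_1^1$-hardness follows by reduction from the recurring computation problem already cited as $\Sigma_1^1$-hard. Given $M$, build $M'$ by adding a fresh counter $v_0$ and inserting \lstinline|inc v_0| at the entry of location $\ell_0$, leaving the remaining code unchanged (up to renumbering). Computations of $M'$ are in one-to-one correspondence with computations of $M$, each visit to $\ell_0$ in $M$ matched by exactly one extra increment of $v_0$ in $M'$; therefore ``$\ell_0$ is visited infinitely often in some computation of $M$'' coincides with ``$v_0$ is incremented infinitely often in some computation of $M'$''. The main obstacle lies in the finite-counter reduction: only there does the construction need to go through an explicit Minsky-style simulation of a Turing machine by a counter machine, rather than a purely local surgery on $M$.
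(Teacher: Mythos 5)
Your proposal is correct. The hardness arguments for the bounded counter problem (reduce halting via a fresh $v_0$ incremented just before \texttt{halt}, with $\beta=0$) and for the unbounded counter problem (reduce recurring computation by inserting \texttt{inc} $v_0$ at $\ell_0$) coincide with the paper's, and your membership arguments ($\Sigma_1^0$ by direct semi-decision over computation prefixes; $\Sigma_2^0$ by existentially quantifying $\beta$ over the $\Pi_1^0$ complement of a bounded-counter instance) are the same up to presentation---the paper phrases $\Sigma_1^0$-membership as a reduction back to halting via guarded increments, which is equivalent.

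The one genuinely different step is the $\Sigma_2^0$-hardness of the finite counter problem. The paper reduces from the non-halting problem of nondeterministic counter machines: it builds an $M'$ that deterministically dovetails through $M$'s computations, incrementing $v_0$ once per completed halting computation and diverging on $v_0$ if the enumeration ever finishes, so that $v_0$ stays bounded iff the simulation eventually gets stuck in a non-halting computation of $M$. You instead reduce from the index set $\mathrm{Fin}=\{e: W_e \text{ finite}\}$: guess $i$, simulate $\varphi_e(i)$ via Minsky's embedding, and on halting load $i$ into $v_0$, so that boundedness of $v_0$ over all computations equals boundedness, hence finiteness, of $W_e$. The trade-off: the paper's construction stays inside the counter-machine world and needs no Turing-machine simulation, but it leans on the claimed $\Sigma_2^0$-completeness of ``some computation of $M$ does not halt''---a statement that deserves scrutiny, since for an input-free machine with finite nondeterministic branching this property is $\Pi_1^0$ by K\"onig's lemma. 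Your reduction sidesteps that issue entirely by starting from a textbook $\Sigma_2^0$-complete set, at the cost of importing the standard (but external) simulation of Turing machines by counter machines; in that respect your argument for this part is the more robust of the two.
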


\iflncs
\begin{proof}
See Appendix~\ref{app:bounded-unbounded-counter-problems-proof}.
\iflncs\qed\else\qedhere\fi\end{proof}
\else
\begin{proof}\iflncs[of Theorem~\ref{th:counter-problems}]$\:$\fi 
We prove hardness by reduction from, respectively, the halting\exOnly{,} \shOnly{and} non-halting\exOnly{, and recurring computation} problems of $n$-counter machines.
\iflncs
We omit the simple proofs that the problems are in $\Sigma_1^0$ and $\Sigma_2^0$.
\else
We then report the simpler corresponding completeness proofs.
\fi

\textbf{Hardness of the bounded counter problem.}
Given a generic $n$-counter machine $M$, we reduce halting to bounded counter for $\beta = 0$ by modifying $M$ into $M'$ as follows.
Add one counter and injectively rename all counters in the instruction list so that the new counter is called $v_0$; thus, $v_0$ is not mentioned in the renamed instructions.
Then, replace the unique halting instruction appearing at some $\ell_h$ in $M$ by two instructions: \mbox{\lstinline|$\ell_h$: inc $\;v_0$|} followed by \lstinline|$\ell_h^+$: halt|.

Since we only added deterministic instructions, there is a one-to-one correspondence between computations of $M$ and computations of $M'$.
A generic nondeterministic computation $\chi$ of $M$ reaches location $\ell_h$ iff the unique corresponding computation $\chi'$ of $M'$ also reaches $\ell_h$.
In such computations $\chi'$, $v_0$ overflows $\beta$ before halting at $\ell_h^+$.
In all, some computation of $M$ halts iff $v_0$ overflows in some computation of $M'$.
Thus, the bounded counter problem is $\Sigma_1^0$-hard.

\textbf{Hardness of the finite counter problem.}
Given a generic $n$-counter machine $M$, we reduce from the non-halting problem.
Create another counter machine $M'$ with a fresh counter $v_0$, which works as follows.
$M'$ simulates all computations of $M$ deterministically: as soon as a specific computation terminates, $M'$ backtracks the simulation and makes a different nondeterministic choice.
(We omit the details of the simulation, which are straightforward.)
Whenever the simulation completes a halting computation of $M$, it increments $v_0$ before continuing with the next computation.
If the simulation ever comes to an end (that is, if $M$ has only finitely many computations, all halting), $M'$ enters an infinite loop that makes $v_0$ diverge.
Therefore, $M'$ has only one non-halting (because either $M'$ enters the infinite loop or $M$ has infinitely many computations) deterministic execution.

Consider now the finite counter problem for $M'$.
If it has answer \textsc{yes}, it means that the simulation eventually executes a non-halting computation of $M$; from that point on, $v_0$ is never incremented.
If it has answer \textsc{no}, it means that the simulation consists of infinitely many halting computations of $M$, or that it reached the divergent loop and hence $M$ had only finitely many halting computations.
The answer to the non-halting problem for $M$ is therefore the same in either case.
This shows that we reduced the non-halting problem to the finite counter problem, and both are $\Sigma_2^0$-hard.
\iflncs\end{proof}\fi

\exOnly{
\textbf{Hardness of the unbounded counter problem.}
Given a generic $n$-counter machine $M$, we reduce recurring computation to unbounded counter by modifying $M$ into $M'$ as follows.
Add one counters and injectively rename all counters in the instruction list so that the new counter is called $v_0$. 
Then, replace the instruction $I$ appearing at location $\ell_0$ in $M$ by two instructions as follows: \lstinline|$\ell_0$: inc $\:v_0$; $\;$ $\ell_0^\prime$: $\:I$|. All other instructions follow $\ell_0'$ as they followed $\ell_0$ in $M$.

Also in this case we only added deterministic instructions; hence there is a one-two-one correspondence between computations of $M$ and computations of $M'$.
A generic nondeterministic computation $\chi$ of $M$ visits location $\ell_0$ infinitely often iff the unique corresponding computation $\chi'$ of $M'$ also reaches the new $\ell_0$ infinitely often; such computations $\chi'$ increment $v_0$ infinitely often when executing $\ell_0$.
In all, some computation of $M$ visits $\ell_0$ infinitely often iff $v_0$ is incremented infinitely often in some computation of $M'$.
Thus, the unbounded counter problem is $\Sigma_1^1$-hard.
}

\iflncs\else
\textbf{Completeness of the bounded counter problem.} 
We reduce the \ifplain\linebreak\fi bounded counter problem (for any $\beta$) to halting, thus showing that the former is in $\Sigma_1^0$ (and hence, by combining it with the hardness result, $\Sigma_1^0$-complete).
The idea is to guard every increment to $v_0$ with a conditional of the form 
\mbox{\lstinline|if  $\;v_0 \geq \beta\;$ goto $\;\ell_h\;$ else inc $\;v_0$|,} where $\ell_h$ is the halting location.
Since $v_0$ is initially zero, a computation halts iff it overflowed in the initial program.
The details of how to encode such modifications using standard instructions are straightforward.

\textbf{Completeness of the finite counter problem.} 
We show that the finite counter problem is in $\Sigma_2^0$ (and hence, by combining it with the hardness result, $\Sigma_2^0$-complete) according to the definition of $\Sigma_2^0$ in the arithmetical hierarchy~\cite{Rog87}.
Let $\mathcal{O_\beta}$ be the set of all counter machines where $v_0$ overflows $\beta$ in some computation.
Previously, we have shown that $\mathcal{O_\beta}$ is $\Sigma_1^0$; hence its complement set $\overline{\mathcal{O_\beta}}$---all counter machines where $v_0 \leq \beta$ in all computations---is $\Pi_1^0$.
The set $\mathcal{F}$ of all counter machines for which the finite counter problem has answer \textsc{yes} is defined by $M \in \mathcal{F} \Longleftrightarrow \exists \beta: \overline{\mathcal{O_\beta}}$, and hence it is $\Sigma_2^0$.
\end{proof}
\fi


\fi

\subsection{MTL and Counter Machines}\label{sec:mtl-counter-machines}
Alur and Henzinger~\cite{AH93} pioneered the usage of counter machines to analyze the complexity of real-time logics.
Using their techniques, we show the essentials of how to encode computations of $n$-counter machines as MTL formulas over $\reals_{\geq 0}$: computations are encoded as timed words; and, given a machine $M$, we build an MTL formula $\Gamma_M$ that is satisfied precisely by the words encoding $M$'s computations.

Consider an $n$-counter machine $M$ with $m+1$ instructions $\ell_0, \ldots, \ell_m$, such that $\ell_h$ is the location of the unique halt instruction.
We introduce the following propositions: $p_k$, for $0 \leq k \leq m$, which holds when $M$ is at location $\ell_k$; and $z_k$, for $1 \leq k \leq n$, which we use to represent the value of counter $v_k$: there are as many distinct occurrences of proposition $z_k$ over a unit interval as the value of counter $v_k$ in the corresponding configuration.
A configuration is a tuple $\langle \ell_k, x_1, \ldots, x_n \rangle$ denoting that $M$ is at location $\ell_k$ and the counters store the values $x_1, \ldots, x_n$.
At each integer time instant: all propositions $z_d$'s are false; and exactly one of the propositions $p_k$'s holds, with $p_0$ holding initially. The $p_k$'s are all false everywhere else:
\[
p_0 
 \land\!
\left(\ifieee\!\!\!\fi
\begin{array}{l}\!
\bigwedge_{1 \leq k \leq m} \!\boxMTL{}{p_k \limpl \bigwedge_{1 \leq j \neq k \leq m} \neg p_j \,\land\, \bigwedge_{1 \leq d \leq n} \neg z_d} \!\,\land \\
\!\bigwedge_{1 \leq k \leq m} \boxMTL{}{p_k \limpl \bigvee_{1 \leq j \leq m} \untilMTL{= 1}{\bigwedge_{1 \leq i \leq m} \neg p_i, p_j}\!}
\end{array}\ifieee\!\!\!\fi
\!\right)\!.
\]
With similar formulas, we constrain the $z_k$'s to occur at distinct instants: whenever $z_k$ then $\neg z_h$ also holds simultaneously, for $h \neq k$.

Each time interval $[t, t+1)$, for $t \in \naturals$, encodes the $(t+1)$-th configuration reached during a valid computation: $p_k$ holding at $t$ means that $M$ is at location $\ell_k$; and, for $1 \leq j \leq n$, $z_j$ holds over $[t, t+1)$ exactly as many times as the integer value stored in counter $v_j$.
The initial configuration $\langle \ell_0, 0, \ldots, 0 \rangle$ is encoded by
\[
\bigwedge_{1 \leq j \leq n} \boxMTL{[0,1]}{\neg z_j}\,.
\]

The encoding of any instruction refers to a current time $t \in \naturals$ and defines the state over $[t+1, t+2)$ as a modification of the state over $[t, t+1)$.
The most significant operation is the increment: \lstinline|$\ell_k$: inc $v_c$|, whose MTL encoding declares that the state in the next interval has exactly one more occurrence of $z_c$ than it has in the current interval:
\begin{equation}
\boxMTL{}{}
\left(
p_k \limpl
\left(
\begin{array}{l}
\diamondMTL{=1}{}\,p_{k+1}  \\
\land\ \bigwedge_{1 \leq d \neq c \leq n} \boxMTL{(0, 1)}{z_d \liff \diamondMTL{=1}{}\,z_d} \\
\land\ \;\boxMTL{(0, 1)}{z_c \limpl \diamondMTL{=1}{}\,z_c} \\
\land \,\untilMTL{(0, 1)}{}
   \left( \begin{array}{l}
   \diamondMTL{=1}{}\,z_c \limpl z_c, \\
   \neg z_c \land \diamondMTL{=1}{}z_c \ \land \\
   \untilMTL{>0}{\neg z_c \land \diamondMTL{=1}{\neg z_c}, p_{k+1}}
 \end{array}\!\right)
\end{array}\ifieee\!\!\fi
\!\right)\ifieee\!\!\fi
\!\right)\ifieee\!\!\fi.
\label{eq:inc-mtl-R}
\end{equation}
In \eqref{eq:inc-mtl-R}'s consequent, the first conjunct states that $\ell_{k+1}$ is the next location visited (since this is not a branch instruction).
The second conjunct states that the values of all counters other than $v_c$ are unchanged: for every occurrence of some $z_d$ in the current interval, there is an occurrence exactly one time unit later in the next interval and vice versa; hence occurrences of $z_d$ are ``copied'' from the current to the next interval.
Similarly, the third conjunct declares that $v_c$  does not decrease ($z_c$'s occurrences in the current interval are copied into the next one).
The fourth conjunct asserts that there exists an instant, after the last occurrence of $z_c$ in the current interval and before the next occurrence of $p_{k+1}$ at the beginning of the next interval, such that $z_c$ occurs exactly once at the corresponding instant in the next interval.
This new distinct occurrence of $z_c$ is always possible thanks to the density of the temporal domain; thus any value of counters can be stored in a unit time interval.
The encoding of other instructions is similar, with the halting instruction determining an indefinite repetition of the final configuration in the future.

\begin{remark}
Over pointwise semantics, we can express a behavior analogous to \eqref{eq:inc-mtl-R} using past operators.
The key observation~\cite{DBLP:journals/lmcs/OuaknineW07} is that the ``copy'' of a counter $v_d$ can be expressed as $\boxMTL{(0, 1)}{z_d \limpl \diamondMTL{=1}{}\,z_d}$ and $\boxMTL{(1, 2)}{}(z_d \limpl \diamondPMTL{=1}{}\,z_d)$, where $\diamondPMTL{=1}{\phi}$ holds iff its arguments held one time unit in the past.
\end{remark}

\section{The Complexity of Bounded Variability} \label{sec:compl-bound-vari}

Given a time domain $\timedomain$ and a formula $\phi$, we define two decision problems---the second is a generalization of the first---that deal with $\phi$'s bounded variability.
We write $\mathcal{B}(\phi)$ to denote the subset of a set $\mathcal{B}$ of timed words that satisfy $\phi$. 

\ifieee
\begin{description}[\IEEEsetlabelwidth{\BVt{v, V}: }]
\else
\begin{description}
\fi
\item[\BVt{v,V}:] Determine whether every model of $\phi$ over $\timedomain$ has variability bounded by $v/V$: does $\alltw{\phi} \subseteq \bwt{v/V}{\phi}$?

\item[\BVt{}:] Determine whether there exist $v, V$ such that the answer to \BVt{v,V} is \textsc{yes}: does 
$\exists\,v, V: \alltw{\phi} \subseteq \bwt{v/V}{\phi}$?
\end{description}
\noindent
A bar denotes the corresponding \emph{complement} problems: \cBVt{v, V} asks whether some model of $\phi$ has variability not bounded by $v/V$ (bounded by $v'/V$ for some $v' > v$, or unbounded); \cBVt{} asks whether, for every $v, V$, some model of $\phi$ has variability not bounded by $v/V$.
Notice that the latter is not the same as asking if some model of $\phi$ has unbounded variability: it may as well be that every model of $\phi$ has bounded variability, but no variability bounds all of the models.

This section establishes the complexity of the decision problems for $\timedomain = \reals_{\geq 0}$ (Section~\ref{sec:compl-bound-vari-ct}) and $\timedomain = \naturals$ (Section~\ref{sec:compl-bound-vari-dt}).

\subsection{Complexity of Bounded Variability over Continuous Time}\label{sec:compl-bound-vari-ct}
Both variants of the bounded variability problems just introduced are undecidable over continuous time, but with different undecidability degrees in the arithmetical hierarchy; in both cases, however, the undecidability degree is lesser than MTL satisfiability, which is highly undecidable ($\Sigma_1^1$-hard~\cite{AH93}).

\begin{theorem} \label{th:c-bounded-all}
\BVc{v, V} is $\Pi_1^0=\cRE$-complete; \BVc{} is $\Sigma_2^0$-complete.
\end{theorem}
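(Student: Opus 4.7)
The plan is to dispatch the two completeness claims in turn, pairing upper bounds drawn from Corollary~\ref{prop:MTL-bounded-sat} with lower bounds obtained by reducing from the counter-machine problems of Theorem~\ref{th:counter-problems} via the MTL encoding of Section~\ref{sec:mtl-counter-machines}. For the upper bound of the first claim, I would show that the complement $\cBVc{v,V}$ is in $\Sigma_1^0 = \RE$. The semi-decision procedure enumerates $v' = v+1, v+2, \ldots$ and, for each $v'$, applies Corollary~\ref{prop:MTL-bounded-sat} to decide whether $\phi$ has a model bounded by $v'/V$ but not by $v/V$; it outputs \textsc{yes} as soon as some such $v'$ is found. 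Soundness is immediate. Completeness requires arguing that whenever $\phi$ has \emph{any} model violating $v/V$ it also has a bounded-variability one; the idea is that violation is a local property (some window of length $\leq V$ contains $v+1$ positions) expressible as an MTL formula $\chi_{v,V}$, so $\cBVc{v,V}$ reduces to satisfiability of $\phi \land \chi_{v,V}$, and a small-model lemma exploiting the locality of $\chi_{v,V}$ extracts a bounded-variability witness.

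For $\Pi_1^0$-hardness of $\BVc{v,V}$, I would reduce from the complement of the bounded counter problem, which is $\Pi_1^0$-complete by Theorem~\ref{th:counter-problems}. Given a counter machine $M$ and bound $\beta$, the encoding of Section~\ref{sec:mtl-counter-machines} yields a formula $\phi_M$ whose models realize the computations of $M$ and in which each unit interval $[t, t+1]$ hosts a constant number of location events plus $\sum_d v_d(t)$ counter events; hence the variability of any model at that interval is tightly controlled, up to an additive constant, by the sum of counter values. After the Theorem~\ref{th:counter-problems} preprocessing that reroutes halting-reachability into the overflow of a fresh counter, the answer \textsc{no} to the bounded counter problem on $M$ coincides with the answer \textsc{yes} to ``does every model of $\phi_M$ have variability bounded by $\beta'/1$?'', for a suitable $\beta'$ absorbing the constant overhead. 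Membership of $\BVc{}$ in $\Sigma_2^0$ then follows immediately: $\BVc{}$ holds for $\phi$ iff there exist $v, V \in \naturals$ with $\BVc{v,V}$ holding for $\phi$, so existentially closing a $\Pi_1^0$ predicate puts the problem in $\Sigma_2^0$.

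For the matching $\Sigma_2^0$-hardness of $\BVc{}$, I would reduce from the finite counter problem, $\Sigma_2^0$-complete by Theorem~\ref{th:counter-problems}, using the same $\phi_M$ encoding: the existence of a uniform bound $\beta$ on the counters across all computations of $M$ translates, through the counter-to-event-count correspondence just described, into the existence of integers $v, V$ such that every model of $\phi_M$ has variability bounded by $v/V$. The main obstacle I anticipate is the completeness of the semi-decision procedure for $\cBVc{v,V}$: one must rule out the possibility that $\phi$ admits only unbounded-variability violators of $v/V$, and the plan is to exploit the locality of $\chi_{v,V}$ so that the bounded-variability satisfiability result of~\cite[Corollary~1]{FR08-FORMATS08} produces a bounded witness whenever $\phi \land \chi_{v,V}$ is satisfiable at all.
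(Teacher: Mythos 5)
Your overall architecture matches the paper's (upper bounds via Corollary~\ref{prop:MTL-bounded-sat} and the arithmetical hierarchy, lower bounds by reduction from the bounded and finite counter problems of Theorem~\ref{th:counter-problems}), and your $\Sigma_2^0$ membership argument for \BVc{} is exactly the paper's. However, the other three pieces have genuine gaps. For the \RE\ upper bound on \cBVc{v,V}: enumerating $v'=v+1,v+2,\ldots$ and asking whether $\phi$ has a model bounded by $v'/V$ but not by $v/V$ is an incomplete semi-decision procedure, because $\phi$ may violate $v/V$ only in models of \emph{unbounded} variability (e.g., a formula forcing at least $n$ events in $[n,n+1]$ for every $n$); then no $v'$ ever succeeds and your procedure diverges even though the answer is \textsc{yes}. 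Your proposed repair---that the bounded-variability satisfiability result ``produces a bounded witness whenever $\phi\land\chi_{v,V}$ is satisfiable at all''---is false as stated: Corollary~\ref{prop:MTL-bounded-sat} only decides satisfiability \emph{over} bounded-variability models and cannot manufacture a bounded model when none exists. The missing idea is the paper's finite-horizon relativization $\phi[T]$, which forces all propositions to be false after time $T$ so that \emph{every} model of $\phi[T]$ trivially has bounded variability; the procedure then increments both the candidate variability bound $\delta$ and the horizon $\Delta$ in tandem.

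Both of your hardness reductions also conflate the specific counter $v_0$ with the total event count. In the encoding of Section~\ref{sec:mtl-counter-machines} the variability of a model reflects the \emph{sum} of all counters, so ``$v_0\le\beta$ in all computations'' does not imply that all models are variability-bounded (another counter may diverge), and the equivalence you assert fails in one direction. For \BVc{v,V} the paper bridges this by spacing the counters into disjoint subintervals and conjoining the formula $\Xi_\beta^{v_0}$ ($\beta+1$ nested diamonds over $z_0$), which forces every model to exhibit a $v_0$-overflow, so that the existence of a window with more than $\beta$ events becomes equivalent to $v_0$ overflowing $\beta$. For the $\Sigma_2^0$-hardness of \BVc{} the mismatch is harder still, because the bound is existentially quantified and cannot be monitored by a fixed formula; the paper's solution is a machine $M_x$ that nondeterministically guesses a bound $x$, dovetails through all computations of $M$, halts as soon as $v_0$ overflows $x-1$, and is conjoined with $\diamondMTL{}{}\,p_h$ to keep only halting simulations, so that \BVc{} answers \textsc{yes} iff only finitely many models survive iff $v_0$ is uniformly bounded. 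Your one-sentence treatment of this reduction does not engage with that difficulty, which is the crux of the paper's proof.
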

\ifplain
\begin{proof}
The completeness result for $\BVc{v, V}$ is proved in Lemmas~\ref{lm:c-bounded-in} and~\ref{lm:c-bounded-hard}.
The completeness result for $\BVc{}$ is proved in Lemmas~\ref{lm:c-unbounded-in} and~\ref{lm:c-unbounded-hard}.
\iflncs\qed\else\qedhere\fi\end{proof}
\fi

\begin{lemma}\label{lm:c-bounded-in}
\BVc{v, V} is in $\Pi_1^0=\cRE$.
\end{lemma}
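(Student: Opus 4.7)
The plan is to establish $\Pi_1^0$-membership by exhibiting a semi-decision procedure for the complement problem $\cBVc{v,V}$, which asks whether $\phi$ has some model over $\reals_{\geq 0}$ with variability not bounded by $v/V$. The central tool is Corollary~\ref{prop:MTL-bounded-sat}, which makes decidable the question whether $\phi$ admits a model in $\bwc{v'/V} \setminus \bwc{v/V}$ for any specified integer $v' > v$. The semi-decision procedure enumerates $v' = v+1, v+2, \ldots$ and, at each iteration, invokes the decision procedure of Corollary~\ref{prop:MTL-bounded-sat} on inputs $\phi$, $v$, $v'$, $V$; it accepts as soon as some call returns \textsc{yes}.

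Soundness is immediate: any positive answer produces a concrete model of $\phi$ outside $\bwc{v/V}$, which directly witnesses $\cBVc{v,V}(\phi)$.

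The key step for completeness is the semantic equivalence: $\phi$ has a model outside $\bwc{v/V}$ iff there exists some $v' > v$ such that $\phi$ has a model in $\bwc{v'/V} \setminus \bwc{v/V}$. The $\Leftarrow$ direction is trivial. For the $\Rightarrow$ direction---which I expect to be the main obstacle---the key observation is that a density violation $\omega \notin \bwc{v/V}$ is a strictly local phenomenon, witnessed by the single inequality $t_{k_0+v} - t_{k_0} \leq V$ at some position $k_0$. I plan to use a model-surgery argument along the lines of~\cite{FR08-FORMATS08}: starting from $\omega \models \phi$ with such a witness, one preserves a neighbourhood of $k_0$ large enough to interpret all subformulas of $\phi$ with their bounded temporal horizon, and replaces the remaining tail with a bounded-variability continuation compatible with $\phi$'s metric commitments. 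The delicate step will be verifying that this surgery respects $\phi$'s global safety and liveness constraints; I expect to handle this by exploiting the timed-automata abstractions underlying Corollary~\ref{prop:MTL-bounded-sat}, which already provide the finite-state machinery needed to regularize models into bounded-variability ones without affecting MTL-definable properties on the relevant portion.
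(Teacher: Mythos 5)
Your overall architecture (semi-decide the complement via repeated calls to Corollary~\ref{prop:MTL-bounded-sat}) matches the paper's, and your soundness direction is fine. But the completeness direction rests on an equivalence that is false, and the model-surgery you sketch to repair it cannot work. The claim ``$\phi$ has a model outside $\bwc{v/V}{}$ iff $\exists v' > v$ such that $\phi$ has a model in $\bwc{v'/V}{} \setminus \bwc{v/V}{}$'' fails precisely when every model of $\phi$ witnessing the violation has \emph{unbounded} variability, i.e.\ lies in no $\bwc{v'/V}{}$ for any finite $v'$. Such formulas exist and are not pathological: the counter-machine encodings of Section~\ref{sec:mtl-counter-machines} can force the number of events per unit interval to grow without bound along every model (this is exactly the mechanism behind $\Sigma_1^1$-hardness of dense-time MTL). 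For such a $\phi$ the left-hand side holds but the right-hand side is vacuously false, so your enumeration of $v'$ never accepts. Your surgery cannot rescue this: the repaired object must still be a model of $\phi$, and a formula of this kind has \emph{no} bounded-variability models at all; moreover, formulas with unbounded $\boxMTL{}{}$/$\untilMTL{}{}$ operators constrain the entire infinite tail, so there is no ``neighbourhood of $k_0$ large enough to interpret all subformulas'' after which the tail is free. If every satisfiable MTL formula admitted a bounded-variability model violating the same local constraint, dense-time MTL satisfiability would reduce to the decidable bounded-variability satisfiability, contradicting $\Sigma_1^1$-hardness.

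The paper's fix is to truncate the \emph{formula} in time rather than to modify the model: it constructs $\phi[T]$, which forces all events into $[0,T]$ and hence (by divergence of timed words) guarantees that every model of $\phi[T]$ has variability bounded by some finite $x/T$. A violation of the $v/V$ bound occurring anywhere in a model of $\phi$ already occurs within some finite prefix $[0,T]$, and that prefix extends to a model of $\phi[T]$; so dovetailing over \emph{two} parameters---the candidate variability bound $\delta$ and the time horizon $\Delta$---and asking Corollary~\ref{prop:MTL-bounded-sat} about $\phi[\Delta]$ yields a complete semi-decision procedure. You need this second enumeration axis (or an equivalent finite-horizon restriction); enumerating $v'$ alone does not suffice.
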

\begin{proof}
We give a procedure to semi-decide \cBVc{v,V}; this establishes that \iflncs\linebreak\fi$\cBVc{v,V} \in \RE$ and thus $\BVc{v,V} \in \cRE$ by complement.

Consider a generic MTL formula $\phi$.
Some model of $\phi$ has variability not bounded by $v/V$ iff: (a) some model of $\phi$ has variability bounded by $v'/V$ but not by $v/V$, for some $v' > v$; or (b) some model of $\phi$ has unbounded variability.
Since we are dealing with divergent models only (see Section~\ref{sec:words}), (b) can only occur with models where the variability is bounded up to any finite time $t$, but the variability bound increases indefinitely over time.\footnote{In related work, we called similar behaviors ``Berkeley''~\cite{FR10-TOCL10,FMMR-TimeBook-12}.}

For any finite time $T$, let $\phi[T]$ denote the MTL formula which restricts the evaluation of $\phi$ to the finite time interval $[0, T]$.
This can be constructed as follows: add a fresh proposition $e$ constrained by $\phi_e = \untilMTL{=T}{e, e \land \boxMTL{> 0}{}\neg e}$.
Rewrite $\phi$ in negation normal form, and replace every atom $q$ by $e \limpl q$. 
Postulate that, if $e$ is false, all other propositions in $\Pcal$ are false as well: $\phi_{\Pcal} = \boxMTL{}{}(\neg e \limpl \bigwedge_{p \in \Pcal} \neg p)$.
Finally, $\phi[T]$ is $\phi_e \land \phi \land \phi_{\Pcal}$.
Since no event occurs after finite time $T$, all models of $\phi[T]$ have variability bounded by $x/T$, for some finite (possibly very large) $x$.
Therefore, some model of $\phi[T]$ has variability not bounded by $t/T$ iff some model of $\phi[T]$ has variability bounded by $t'/T$ but not by $t/T$, for some $x \geq t' > t$.

We can now describe a procedure $P_1$ that semi-decides \cBVc{v,V}; it consists of the following steps:
\begin{stdenum}
\item Initially, $\delta := v + 1$ and $\Delta := V + 1$;
\item \label{step2} Using Corollary~\ref{prop:MTL-bounded-sat}, decide whether $\phi[\Delta]$ has some model with variability bounded by $\delta/V$ but not by $v/V$;
\item If it does, stop and return \textsc{yes};
\item Otherwise $\delta := \delta + 1$, $\Delta := \Delta + 1$, and go to (\textit{\ref{step2}}).
\end{stdenum}
If the answer to \cBVc{v, V} is \textsc{yes}, then either (a) or (b) above holds; let us show that, in both cases, $P_1$ terminates with the correct answer.

If (a) is the case, let $\omega_a$ be a model with variability bounded by $v'/V$ but not by $v/V$ for some $v' > v$; that is, $\omega_a$ has $\overline{v}$ events, for $v < \overline{v} \leq v'$, over some time interval $[x, x+V]$.
In this case, $P_1$ terminates with \textsc{yes} as soon as $\delta \geq \overline{v}$ and $\Delta \geq x + V$.

If (b) is the case, let $\omega_b$ be a model with unbounded variability; since variability is unbounded, there exists a time $T$ such that: $\omega_b$ has $v' > v$ events over some time window $[x, x+V]$, for $0 \leq x < x+V \leq T$.
In this case, $P_1$ terminates with \textsc{yes} as soon as $\delta \geq v'$ and $\Delta \geq T$.
\iflncs\qed\else\qedhere\fi\end{proof}

\begin{lemma}\label{lm:c-bounded-hard}
\BVc{v, V} is $\cRE$-hard.
\end{lemma}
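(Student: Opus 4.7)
The plan is to prove hardness by reducing the bounded counter problem---which is $\Sigma_1^0$-complete by Theorem~\ref{th:counter-problems}---to the complement problem \cBVc{v,V}; by complementation this yields $\cRE$-hardness of \BVc{v,V}. Given an instance of the bounded counter problem consisting of an $n$-counter machine $M$ and an integer threshold $\beta$, I would construct an MTL formula $\phi_M$ over $\reals_{\geq 0}$ such that some model of $\phi_M$ has a window of length $V$ containing more than $v$ positions iff $v_0$ overflows $\beta$ in some computation of $M$. Since bounded counter remains $\Sigma_1^0$-hard for each fixed $\beta\geq 0$ (the proof of Theorem~\ref{th:counter-problems} adapts by prepending $\beta$ extra \texttt{inc $v_0$} before the halt), one can freely align $\beta$ with the given threshold $v$ by setting $\beta = v - c$ for a small constant $c$ that accounts for the bookkeeping events produced by the encoding.

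I would base $\phi_M$ on the counter-machine encoding $\Gamma_M$ of Section~\ref{sec:mtl-counter-machines} with a hybrid modification so that variability within any $V$-window reflects only the current value of $v_0$. The obstacle with $\Gamma_M$ is that every counter is represented by dense occurrences inside a unit interval, so the unbounded growth of any counter inflates variability regardless of $v_0$. To bypass this, within each configuration $i$ starting at time $T_i$, the value of $v_0$ is still encoded densely by $v_0(i)$ occurrences of $z_0$ packed in a sub-interval of length less than $V$, while the values of the remaining counters are encoded sparsely by $V$-spaced occurrences of dedicated propositions, extending the configuration's duration to $T_{i+1} - T_i$ proportional to $\max_{j \geq 1} v_j(i)$. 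This guarantees that the number of positions in any closed $V$-window is bounded by $v_0(i) + O(1)$ when the window covers $i$'s dense $z_0$-block, and by a constant otherwise, so some window in some model exceeds $v$ positions iff some configuration of some computation has $v_0 > \beta$.

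The main obstacle is crafting the MTL formulas that enforce the hybrid encoding and faithfully simulate $M$'s transitions. Dense encoding of $v_0$ operations (increment, decrement, zero test) can reuse the formulas of Section~\ref{sec:mtl-counter-machines} essentially verbatim, since $v_0$ still lives inside a unit-length block. The sparse encoding of the remaining counters is more delicate: the ``copy'' primitive for them can no longer exploit a fixed unit offset between consecutive configurations, because $T_{i+1} - T_i$ now depends on those counters' values. I would anchor each configuration's boundary with the location propositions $p_k$ and express counter operations as until-formulas delimited by the next $p_k$, matching $V$-spaced events inside the current configuration with their counterparts in the next configuration one by one via these anchors. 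Once the encoding is shown to be faithful in both directions---every timed word model of $\phi_M$ encodes a computation of $M$ and vice versa---the variability analysis in the previous paragraph closes the reduction.
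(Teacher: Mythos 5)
Your overall strategy---reduce the bounded counter problem to $\cBVc{v,V}$ and conclude by complementation via Theorem~\ref{th:counter-problems}---is exactly the paper's, and your remark about aligning $\beta$ with the given bound $v$ is legitimate. The gap is in the encoding. By giving the counters other than $v_0$ a sparse, $V$-spaced representation, you make the duration $T_{i+1}-T_i$ of a configuration depend on the counter values, and this destroys the only mechanism MTL has for simulating counters over dense time: the punctual ``copy'' $\boxMTL{(0, 1)}{z_d \liff \diamondMTL{=1}{}\,z_d}$, which matches each occurrence with one at a \emph{fixed} metric offset. Once consecutive configurations have unbounded and varying lengths, there is no fixed constant to put inside a punctual interval, and the replacement you sketch---until-formulas delimited by the $p_k$ anchors---can only constrain what holds between two anchors; it cannot set up a bijection between the $j$-th event of one block and the $j$-th event of the next, which is what faithfulness of the simulation requires. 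Without faithfulness, the direction ``some model has more than $v$ events in a $V$-window, hence $v_0$ overflows $\beta$'' collapses, because $\phi_M$ acquires models that encode no computation of $M$ at all. This matching step is precisely the crux of the reduction, and it is the one step your proposal leaves as a sketch.

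The paper avoids the problem by not touching the counter representation at all: every counter stays densely packed inside a sub-unit interval, where the punctual copy works, and configurations are merely padded to fixed length $4$ so that any unit window meets at most one counter's block. The mismatch between ``variability of all events'' and ``value of $v_0$ specifically'' is then resolved not by re-engineering the model geometry but by conjoining the formula $\Xi_\beta^{v_0}$ built from $\beta+1$ nested $\diamondMTL{(0, 1)}{}$ operators, which is satisfiable together with $\Gamma_M$ iff $v_0$ overflows $\beta$ in some computation. The two directions of the reduction then decouple cleanly: mere satisfiability of $\Gamma_M \land \Xi_\beta^{v_0}$ already certifies the overflow of $v_0$ (so any model witnessing $\cBVc{\beta,1}$ yields a \textsc{yes} instance of bounded counter), and the overflow itself already packs $\beta+1$ occurrences of $z_0$ into a window of length less than $1$ (so a \textsc{yes} instance of bounded counter yields a witness for $\cBVc{\beta,1}$). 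Adopting this device lets you keep the standard encoding of Section~\ref{sec:mtl-counter-machines} essentially verbatim instead of trying to make the variability of the whole word track $v_0$ directly.
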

\begin{proof}
We reduce the bounded counter problem (Section~\ref{sec:bound-unbo-count}) of $2$-counter machines to \cBVc{v, V}; the lemma follows by Theorem~\ref{th:counter-problems} through complement problems.

Consider a generic $2$-counter machine $M$ with counters $v_0$ and $v_1$.
We construct an MTL formula $\Gamma_{M}$ that encodes the computations of $M$ along the lines of Section~\ref{sec:mtl-counter-machines}, but with some modifications.
For $t \in \naturals$, the $(t + 1)$-th configuration $\langle \ell_k, x_0, x_1 \rangle$ is encoded over the time interval $[4t, 4t + 4)$ as follows: $p_k$ holds at $4t$, $z_0$ holds $x_0$ times over $(4t + 1, 4t + 2)$, $z_1$ holds $x_1$ times over $(4t + 3, 4t + 4)$, and no propositions hold elsewhere over the whole $[4t, 4t + 4)$.
With this spacing of counter events, we can see that the models of $\Gamma_{M}$ are such that any interval of length $1$ includes at most as many events as the largest value held by a counter during some computation.
Thus, $\Gamma_{M}$ has some model with variability not bounded by $\beta/1$, which is an instance of  \cBVc{v, V}, iff a counter overflows $\beta$ in some computation of $M$.

Now we have only established whether \emph{some} counter overflows in $M$, whereas the bounded counter problem specifically targets overflows of $v_0$.
To close the gap, we encode the overflowing of $v_0$ in $M$ as an MTL formula $\Xi_\beta^{v_0}$:
\[
\diamondMTL{}{}\left(\! \left(\bigvee_{0 \leq k \leq m}p_k \right) \land \overbrace{\diamondMTL{(0, 1)}{z_0 \land \diamondMTL{(0, 1)}{z_0 \land \cdots}}}^{\beta+1 \text{ nested diamonds}} \right) \,.
\]
Thanks to the padding, the nested diamonds evaluate to true iff there are at least $\beta+1$ distinct occurrences of $z_0$ in the slot corresponding to one configuration.
Thus, $v_0$ overflows $\beta$ in $M$ iff $\Gamma_{M} \land \Xi_\beta^{v_0}$ has some model with variability not bounded by $\beta/1$.
\iflncs\qed\else\qedhere\fi\end{proof}

\begin{lemma} \label{lm:c-unbounded-in}
\BVc{} is in $\Sigma_2^0$.
\end{lemma}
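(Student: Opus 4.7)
The plan is to observe that $\BVc{}$ is by definition the existential closure of $\BVc{v,V}$ over its parameters, a predicate already shown in Lemma~\ref{lm:c-bounded-in} to be in $\Pi_1^0$. Since $\Sigma_2^0$ is exactly the set of predicates expressible as an existential quantifier over a $\Pi_1^0$ matrix (with all quantifiers ranging over $\naturals$), the upper bound will drop out by direct quantifier composition, provided the outer witnesses $(v, V)$ can be confined to $\naturals^2$.

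The only substantive step is therefore the restriction to integer witnesses, since the definition a priori allows $V \in \reals_{\geq 0}$. I would argue combinatorially: if $\phi$ has variability bounded by $v/V$ for some real $V > 0$, let $V' := \lceil V \rceil \in \naturals$. Any closed window of length $V'$ is covered by at most $\lceil V'/V \rceil$ consecutive windows of length $V$, each containing at most $v$ events, so it contains at most $v' := v \, \lceil V'/V \rceil \in \naturals$ events. Hence $\phi$ also has variability bounded by $v'/V'$ with $v', V' \in \naturals$, and so $\BVc{}(\phi)$ is equivalent to $\exists v, V \in \naturals: \BVc{v, V}(\phi)$.

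It then remains to instantiate the bound. By Lemma~\ref{lm:c-bounded-in}, for each fixed $(v, V) \in \naturals^2$ there is a decidable predicate $R$ such that $\BVc{v, V}(\phi) \Leftrightarrow \forall n \in \naturals: R(\phi, v, V, n)$ (take $R$ to be the negation of the one-step check of the Turing procedure that semi-decides the complement $\cBVc{v,V}$). Substituting into the outer existential yields $\BVc{}(\phi) \Leftrightarrow \exists v, V \in \naturals: \forall n \in \naturals: R(\phi, v, V, n)$; after pairing $(v, V)$ into a single natural via any standard computable bijection this has the canonical $\Sigma_2^0$ form. I do not expect any real obstacle beyond the bookkeeping of the integer-witness reduction above; the complementary $\Sigma_2^0$ hardness needed to complete Theorem~\ref{th:c-bounded-all} is a genuinely distinct argument, to be handled in a separate lemma.
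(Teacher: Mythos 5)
Your proof is correct and follows essentially the same route as the paper: both exhibit $\BVc{}$ as an integer existential quantification over the $\Pi_1^0$ predicate $\BVc{v,V}$ of Lemma~\ref{lm:c-bounded-in} (the paper phrases this as a dovetailed enumeration relative to a $\BVc{v,V}$ oracle and normalizes the witness to $V=1$, whereas you keep integer pairs $(v,V)$ and write out the $\exists\forall$ normal form explicitly). Your covering argument confining the witnesses to $\naturals^2$ makes explicit a step the paper leaves implicit, and is sound.
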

\begin{proof}
Given the definition of $\Sigma_2^0$ in the arithmetical hierarchy~\cite{Rog87}, it is sufficient to provide an enumeration of all MTL formulas $\phi$ for which the answer to \BVc{} is \textsc{yes}, relative to an oracle for \BVc{v, V}, which is in $\Pi_1^0$ by Lemma~\ref{lm:c-bounded-in}.
To this end, we dovetail~\cite[Chap.~3]{Pap94} through all pairs $(v, \phi)$ of nonnegative integers $v \in \naturals$ and MTL formulas $\phi$.
For each pair, if the answer to \BVc{v, 1} is \textsc{yes} for $\phi$, then the answer to \BVc{} also is \textsc{yes} for $\phi$.
It is clear that this enumeration eventually finds all formulas for which the answer to \BVc{} is \textsc{yes}.
\iflncs\qed\else\qedhere\fi\end{proof}

\begin{lemma} \label{lm:c-unbounded-hard}
\BVc{} is $\Sigma_2^0$-hard.
\end{lemma}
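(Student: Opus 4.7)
The plan is to establish $\Sigma_2^0$-hardness of $\BVc{}$ by reduction from the finite counter problem for $2$-counter machines, which is $\Sigma_2^0$-complete by Theorem~\ref{th:counter-problems}. Given a $2$-counter machine $M$, I would build an MTL formula $\phi_M$ over $\reals_{\geq 0}$ whose models encode exactly the computations of $M$, and whose variability reflects only the magnitude attained by counter $v_0$. Then some $\beta$ bounding $v_0$ across all computations of $M$ exists iff some uniform variability bound $v/V$ for all models of $\phi_M$ exists, i.e.\ $\phi_M \in \BVc{}$.

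The construction adapts the formula $\Gamma_M$ of Lemma~\ref{lm:c-bounded-hard} with a ``mixed-density'' layout of configurations. Each configuration $\langle \ell_k, x_0, x_1 \rangle$ is laid out over a slot of variable length delimited by a fresh boundary proposition $s$: $p_k$ marks the start; $x_0$ occurrences of $z_0$ are packed densely inside one length-$1$ sub-interval, exactly as in $\Gamma_M$; but the $x_1$ occurrences of $z_1$ are placed sparsely, at most one per unit of time, occupying a sub-slot of length $x_1$. The MTL conjuncts enforcing the semantics of each instruction (increment, decrement, conditional branch, halt) mirror those of Section~\ref{sec:mtl-counter-machines}, but with ``next configuration'' resolved as ``next occurrence of $s$'' so that the ``copy'' subformulas, which preserve counter values between consecutive configurations, are rephrased relative to $s$ rather than to the fixed period $4$. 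The resulting $\phi_M$ remains of polynomial size in $|M|$.

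Under this layout every model of $\phi_M$ corresponds to a (halting or non-halting) computation of $M$, and any unit-length sub-interval of the model contains at most $v_0(t) + c$ events, where $c$ is a fixed constant absorbing the overhead from $p_k$, $z_1$, and $s$; conversely, for every $t$ this bound is (up to $c$) attained at the sub-interval packing the current value of $v_0$. Hence the variabilities of the models of $\phi_M$ are uniformly bounded by some $v/V$ iff the values of $v_0$ are uniformly bounded over all computations of $M$, which establishes that $\BVc{}$ is $\Sigma_2^0$-hard.

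The principal obstacle is precisely the decoupling of the variability contributions of $v_0$ and $v_1$: in $\Gamma_M$ both counters pack densely into separate unit sub-intervals, so the variability of a model is $\max_t \max(v_0(t), v_1(t))$, and a bound on $v_0$ alone does not entail a variability bound, breaking the reduction. Once the variable-length-slot layout is adopted, re-expressing the instruction formulas relative to the boundary marker $s$ is routine but notationally heavy, and is the only non-trivial step; correctness of the simulation and the variability calculation then follow the pattern of Lemma~\ref{lm:c-bounded-hard} without further difficulty.
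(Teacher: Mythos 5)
Your reduction is from the right problem (the finite counter problem, via Theorem~\ref{th:counter-problems}), and your logical skeleton---``all models uniformly variability-bounded iff $v_0$ uniformly bounded''---would indeed give $\Sigma_2^0$-hardness if the encoding existed. But the step you dismiss as ``routine but notationally heavy'' is exactly where the construction breaks. The only mechanism MTL has for relating counter contents of consecutive configurations is the punctual copy constraint of Section~\ref{sec:mtl-counter-machines}, of the form $\boxMTL{(0,1)}{z \limpl \diamondMTL{=1}{}\,z}$ together with its converse: it matches each occurrence with one at a \emph{fixed, syntactically hard-coded} temporal offset. In your mixed-density layout the distance between a configuration and the next is $2 + x_1$ (or similar), an unbounded, run-dependent quantity, so there is no interval $J$ you can write into $\diamondMTL{J}{}$ to express ``$z_0$ recurs at the corresponding position after the next occurrence of $s$.'' The same problem reappears for $v_1$ itself: ``the next inter-$s$ gap equals the current one plus one'' is a comparison of two unbounded durations, which MTL cannot state. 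So the instruction semantics cannot be ``rephrased relative to $s$'' by any routine rewriting; the decoupling of the two counters' variability contributions, which you correctly identify as the crux, is left unachieved.

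The paper avoids this entirely by working on the machine rather than on the formula. It builds an auxiliary machine $M_x$ that first nondeterministically loads a guess $x$ into a fresh counter, then dovetails through all computations of $M$ and \emph{halts} as soon as $v_0$ overflows $x-1$ (or when the dovetailing exhausts finitely many computations). Taking $\Gamma = \Gamma_{M_x} \land \diamondMTL{}{}\,p_h$ restricts attention to halting simulations, so every individual model has some finite variability; the question of a \emph{uniform} bound then reduces to whether there are finitely or infinitely many such models, which is governed precisely by whether $v_0$ is uniformly bounded in $M$. This sidesteps any need to make one counter's value invisible to the variability measure. If you want to salvage your route, you would need a fundamentally new copying mechanism for variable-offset configurations, and I do not see one within MTL.
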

\begin{proof}
We reduce the finite counter problem (Section~\ref{sec:bound-unbo-count}) of $n$-counter machines to \BVc{}; the lemma follows by Theorem~\ref{th:counter-problems}.

This reduction is the trickiest among those in this paper. 
The difficulty lies in the fact that, while the finite counter problem refers a specific counter $v_0$, \BVc{} considers variability of all propositions; while it is easy to reduce from general to specific, here we need to build a reduction in the opposite direction.
The proof of Lemma~\ref{lm:c-bounded-hard} involves a similar mismatch, but things are simpler there, thanks to the existence of a known bound $\beta$, which we can monitor explicitly; now, instead, the bound is existentially quantified.
An easy solution would be to change the definition of \BVc{} to refer a specific proposition that varies, but that would weaken the result proved.
Instead, we leverage nondeterminism to ``guess'' the bound.

Build a counter machine $M_x$ which simulates computations of $M$ as follows.
Every computation of $M_x$ starts by nondeterministically storing a positive integer $x$ in a fresh counter $v_x$.
This is achieved by the instructions:
\begin{equation}\label{eq:nondet-shuffle}
\begin{split}
  &\ell_0: \text{ \lstinline|inc| }v_x \\
  &\ell_1: \text{ \lstinline|if| } v_{x} > 0 \text{ \lstinline|goto| } \ell_0, \ell_2
\end{split}
\end{equation}
If $\ell_1$'s nondeterministic branch eventually jumps to $\ell_2$, the rest of $M_x$'s program simulates all computations of $M$ by dovetailing~\cite[Chap.~3]{Pap94}, so that the simulation does not get stuck in non-terminating computations of $M$.
Additionally, whenever $v_0$ overflows $x-1$, the simulation halts; and if $M$ had only finitely many computations, the simulation concludes with an infinite idle loop (unless it has previously halted upon $v_0$ overflowing).

Consider now the MTL formula $\Gamma = \Gamma_{M_x} \land \diamondMTL{}{}\,p_{h}$, where $\Gamma_{M_x}$ encodes $M_x$'s computations as in Section~\ref{sec:mtl-counter-machines}, and $\ell_h$ is the unique halting location of $M_x$.
Thus, the models of $\Gamma$ describe all valid computations of $M_x$ that halt (and hence, in particular, that do not get stuck forever in the initial loop \eqref{eq:nondet-shuffle} that increments $v_x$---something we get for free given that we are reducing between undecidable problems).

Consider now \BVc{} for $\Gamma$.
If its answer is \textsc{yes}, then there must be only finitely many models that satisfy $\Gamma$; otherwise, they would include simulations for all values of $x$, which would entail that, for every possible bound $x$, there exists a model where $v_0$ overflows $x$, against the hypothesis that the answer is \textsc{yes} (i.e., all models are bounded).
Therefore, there is a finite bound on $v_0$ in all computations of $M$, given by one plus the maximum of values reached by $v_0$ in all finitely many models.
Conversely, if the answer to \BVc{} for $\Gamma$ is \textsc{no}, then there must be infinitely many models that satisfy $\Gamma$, that is one for every value of $x$; in fact, all such models are halting, and hence if they are finitely many the maximum of all counters in all such models would be well defined and finite, against the hypothesis that the answer is \textsc{no} (i.e., there always is an unbounded model).
The existence of halting models for all values of $x$ entails that $v_0$ overflows any finite value in some computation.
In summary, the answer to \BVc{} for $\Gamma$ is \textsc{yes} iff the answer to the finite counter problem for $M$ is \textsc{yes}.
This concludes the reduction.
\iflncs\qed\else\qedhere\fi\end{proof}

\subsection{Complexity of Bounded Variability over Discrete Time}\label{sec:compl-bound-vari-dt}
The complexity of bounded variability over discrete time paints a picture quite different from that over continuous time.
Lemmas~\ref{lm:d-bounded-hard}--\ref{lm:d-bounded-in} prove that \BVd{v, V} is $\expspace$-complete, which is the same complexity as MTL satisfiability over $\naturals$.
On the other hand, it is clear that \BVd{} is decidable in constant time, since every discrete-time MTL formula has variability bounded by $x/x$ for any integer $x > 0$, precisely because the time domain is discrete, and hence there is a hard upper bound on the variability of events.

\begin{lemma}\label{lm:d-bounded-hard}
\BVd{v, V} is $\expspace$-hard.
\end{lemma}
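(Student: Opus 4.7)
The plan is to give a polynomial-time reduction from MTL satisfiability over $\naturals$, which is \expspace-hard~\cite{AH93}, to the complement problem $\cBVd{v,V}$; closure of \expspace\ under complement then yields \expspace-hardness of $\BVd{v,V}$. I would fix the simplest instance $v = V = 1$, under which a timed word has variability not bounded by $v/V$ iff two of its consecutive timestamps differ by exactly $1$.

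Given $\phi$ over $\Pcal$, pick a fresh proposition $q \notin \Pcal$ and set $\mathsf{o} = \bigvee_{p \in \Pcal} p$, $\mathsf{a} = \neg \mathsf{o}$. The core ingredient is a \emph{relativization} $\mathrm{Rel}(\cdot)$, defined by structural induction, that acts as the identity on atoms and Boolean connectives and translates until as
\[
\mathrm{Rel}(\untilMTL{J}{\phi_1, \phi_2}) \;=\;
\untilMTL{J}{\mathsf{a} \lor (\mathsf{o} \land \mathrm{Rel}(\phi_1)),\;
\mathsf{o} \land \mathrm{Rel}(\phi_2)},
\]
so that the rewritten until ranges only over \emph{original} positions (those carrying some $p \in \Pcal$) and treats $\{q\}$-only positions as transparent. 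The reduction returns
$\psi = \mathsf{o} \land \boxMTL{}{\diamondMTL{}{\mathsf{o}}} \land \mathrm{Rel}(\phi) \land \chi$,
where the first two conjuncts force position~$0$ to be original and infinitely many originals to follow, and where
$\chi = \nextMTL{=1}{\logtrue} \lor \diamondMTL{}{\nextMTL{=1}{\logtrue}}$
expresses the existence of two consecutive positions that are $1$ time unit apart, i.e.\ the negation of $1/1$-bounded variability. The whole construction is plainly polynomial in $|\phi|$.

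Correctness would follow from a projection lemma: for every timed word $\omega$ over $\Pcal \cup \{q\}$ with position~$0$ original and infinitely many originals, $\omega \pwm \mathrm{Rel}(\phi)$ iff the $\Pcal$-projection of $\omega$ (delete positions where only $q$ holds, intersect labels with $\Pcal$) models $\phi$. Given this, $(\Leftarrow)$ any $\omega \pwm \psi$ projects to a model of $\phi$; $(\Rightarrow)$ any $\omega_0 \pwm \phi$ either already has two timestamps at distance $1$, in which case $\omega_0$ itself (viewed over $\Pcal \cup \{q\}$ with $q$ never holding) is a model of $\psi$ violating the bound, or has all gaps at least $2$, so time $1$ is unoccupied and I insert a new $\{q\}$-labelled position there, obtaining a word $\omega \pwm \psi$ whose positions at times $0$ and $1$ witness both $\chi$ and the violation.

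I expect the projection lemma to be the main technical obstacle. The analogous stuttering argument for untimed LTL is only a rough guide, because MTL's until carries an absolute-time interval $J$. Crucially, however, inserting or deleting $\{q\}$-only positions alters only position indices of $\omega$, never its timestamps, so $J$ is preserved verbatim; the rewritten until then restricts its outer witness to originals and absorbs auxiliary positions vacuously through $\mathsf{a}$. A straightforward induction on formula structure, with the until clause being the only interesting case (derived $\nextMTL{}{}$ and $\diamondMTL{}{}$ inherit correctness from until), should complete the argument.
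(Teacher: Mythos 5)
Your reduction is correct, but it takes a genuinely different route from the paper's, and a considerably more elaborate one. The paper also reduces MTL satisfiability over $\naturals$ to $\cBVd{v,V}$ and invokes closure of $\expspace$ under complement, but it does not transform the formula at all: it uses the Alur--Henzinger small-gap property (a formula $\phi$ is satisfiable over $\naturals$ iff it has a model in which consecutive timestamps differ by at most the product $\delta_{\phi}$ of all constants in $\phi$) to conclude that $\phi$ is satisfiable iff some model of $\phi$ has variability not bounded by $0/\delta_{\phi}$; since no infinite timed word can have zero events in a time window, the instance $(\phi, 0, \delta_{\phi})$ --- with $\delta_{\phi}$ written in binary, hence of size polynomial in $|\phi|$ --- decides satisfiability outright. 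The price of that one-line argument is that it exploits the degenerate parameter value $v=0$ and lets the parameters depend on the input formula. Your construction instead fixes the non-degenerate parameters $v=V=1$ and pays for this with the relativization $\mathrm{Rel}$ and the insertion of a padding position at time $1$; what it buys is a somewhat more robust statement (hardness already for a fixed, meaningful choice of $v, V$) and independence from the small-gap property. The projection lemma you isolate is indeed the only real technical content, and your sketch of it is sound; two small points to tidy up: state the lemma for arbitrary \emph{original} positions $k$ of $\omega$ (not only position $0$), since the induction hypothesis must be applied at the intermediate and witness positions of the until, and note explicitly that $|\mathrm{Rel}(\phi)|$ is $\Orm(|\phi|\cdot|\Pcal|)$ because $\mathsf{o}$ is inserted at every temporal operator, which keeps the reduction polynomial.
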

\begin{proof}
We polynomial-time reduce MTL satisfiability to \cBVd{v, V}; the lemma follows since $\expspace$ is closed under complement.

The decision procedure for discrete-time TPTL is based on the following fundamental property~\cite[Lemma~5]{AH94}: a TPTL formula $\psi$ is satisfiable iff it has a model where the difference between any pair of consecutive timestamps is always less than or equal to the product $\delta_{\psi}$ of all constants appearing in $\psi$.
The same property holds of MTL formulas over the integers~\cite{AH93}. 

Since we are considering timed $\omega$-words, which have infinitely many events, the property entails that an MTL formula $\phi$ is satisfiable iff there exists a timed word $\omega$ such that: $\omega$ has at least one event with timestamp $t \leq \delta_{\phi}$ and $\omega \models \phi$.
Therefore, a generic MTL formula $\phi$ is satisfiable iff some of its models have variability not bounded by $0/\delta_{\phi}$, that is iff the answer to $\cBVd{0, \delta_{\phi}}$ is \textsc{yes}.
Assuming a binary encoding of constants, as it is customary, $\delta_{\phi}$ is polynomial in the size of $\phi$ (because the product of $n$ constants has size $\Orm(n^2)$ in binary), thus the reduction is done in polynomial time.
\iflncs\qed\else\qedhere\fi\end{proof}

\begin{lemma}\label{lm:d-bounded-in}
\BVd{v, V} is in $\expspace$.
\end{lemma}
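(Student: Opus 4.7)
The plan is to reduce the complement problem $\cBVd{v,V}$ in polynomial time to MTL satisfiability over $\naturals$. Since the latter is in $\expspace$~\cite{AH93} and $\expspace$ is closed under complement, this yields the desired upper bound on $\BVd{v,V}$.

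Concretely, I would construct (with constants in binary) an MTL formula $\phi_{>v/V}$, over fresh auxiliary propositions and of size polynomial in the input, such that its models over $\naturals$ are exactly the timed words witnessing $\exists k: t_{k+v} - t_k \leq V$. Then $\phi$ is a \textsc{yes}-instance of $\BVd{v,V}$ iff $\phi \land \phi_{>v/V}$ is unsatisfiable, a problem that lies in $\expspace$.

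The main obstacle is expressing ``the $v$-th next position occurs within $V$ time'' compactly, since a naive nesting of $v$ copies of $\nextMTL{}{}$ blows up exponentially in the binary size of $v$. My plan is to use the binary-counter trick familiar from Alur and Henzinger's $\expspace$-hardness proof for MTL~\cite{AH93}: introduce a fresh marker proposition $s$ together with $O(\log V)$ propositions acting as the bits of a counter $\mathit{val}$ taken modulo some $M > V$, and then write polynomial-size MTL constraints that (i) force $s$ to hold at exactly one position --- expressible with future-only operators as $(s \lor \diamondMTL{}{s}) \land \boxMTL{}{s \limpl \boxMTL{>0}{\neg s}}$ --- (ii) require $\mathit{val} = 0$ at the $s$-position, and (iii) enforce the increment recurrence $\mathit{val} \to \mathit{val} + 1 \pmod M$ at every next position bit-by-bit via $\nextMTL{}{}$. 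Because $M > V$, the first recurrence of $\mathit{val} = v$ after $s$ is exactly $v$ positions later, and no other recurrence can fall within $V$ time. The variability-violation formula is then
\[
\phi_{>v/V} \;=\; \phi_{\mathrm{setup}} \,\land\, \diamondMTL{}{\bigl(s \land \diamondMTL{\leq V}{\mathit{val} = v}\bigr)},
\]
where $\mathit{val} = v$ abbreviates the $O(\log V)$-size Boolean encoding of the counter value, and $\phi_{\mathrm{setup}}$ is the conjunction of constraints (i)--(iii). Since $\phi \land \phi_{>v/V}$ has polynomial size in the input, MTL satisfiability over $\naturals$ decides $\cBVd{v,V}$ in $\expspace$, establishing the lemma.
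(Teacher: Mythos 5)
Your proof is correct, and it takes a genuinely different route from the paper's. The paper encodes bounded variability \emph{positively} as a formula $B_{v,V}$ and reduces \BVd{v,V} to validity of $\phi \limpl B_{v,V}$; since $B_{v,V}$ must constrain \emph{every} position, the natural encoding uses $v$ cyclically repeating marker propositions and is exponential in the binary encoding of $v$, and the paper explicitly remarks that a binary counter does not repair this (one would have to enumerate all $v$ counter values to say ``the next recurrence of the \emph{current} value lies more than $V$ away''). The paper escapes by extending MTL with a counting modality $\CntMTL{>V}{v}\logtrue$ and asserting, with details omitted, that the standard decision procedure accommodates it at no extra cost. You instead encode the \emph{violation} of bounded variability, which is existential: a single witness position $k$ with $t_{k+v}-t_k \leq V$. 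Anchoring the binary counter to value $0$ at one nondeterministically chosen marked position $s$ sidesteps exactly the obstacle the paper identifies, because only the single value $v$ ever needs to be tested; and taking the modulus $M>V$ (with $v \leq V < M$ w.l.o.g., since for $v>V$ the answer is trivially \textsc{yes} over strictly increasing integer timestamps) guarantees that the first position after $s$ with $\mathit{val}=v$ is $s+v$ and all later recurrences fall outside the window of length $V$. This yields a polynomial-size reduction of \cBVd{v,V} to plain MTL satisfiability over $\naturals$, which is in $\expspace$ and closed under complement---arguably more self-contained than the paper's appeal to an extended logic with an unproved complexity claim. Two details to tidy up, neither affecting the substance: the degenerate case $v=0$ must be handled separately (your test $\mathit{val}=0$ would first recur only at $s+M$, outside the window, so the construction would wrongly answer \textsc{yes} for every satisfiable $\phi$; as in the paper, $v=0$ reduces to unsatisfiability of $\phi$), and since the paper's $\diamondMTL{}{}$ and $\boxMTL{}{}$ are strict-future operators, your ``exactly one $s$'' and outermost ``eventually'' constraints need the usual ``now or strictly later'' adjustment at position $0$.
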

\begin{proof}
We show how to encode the requirement that a model has variability bounded by $v/V$ as an MTL formula $B_{v,V}$.

If $v = 0$, then $B_{v, V} = \boxMTL{}{}\logfalse$.
Otherwise, we can adapt the techniques we introduced for LTL~\cite{FS11-TIME11}.
Consider $v > 0$ fresh propositions $p_i$, for $i = 1, \ldots, v$.
Proposition $p_1$ holds initially, followed by $p_2, \ldots, p_v$ in sequence; the sequence repeats indefinitely:
\[
B_v \ifieee\else\ \fi=\ifieee\else\ \fi p_1 \land \bigwedge_{1 \leq k \leq v} \!\!\!\left(\!\boxMTL{}{p_1 \liff \nowonMTL{}{}\,p_{k \oplus 1}} \land \boxMTL{}{p_k \limpl \!\!\!\!\bigwedge_{1 \leq h \neq k \leq v} \neg p_h} \!\!\!\right)
\]
where $a \oplus b$ is a shorthand for $1 + ((a+b) \bmod v)$.
Since every $p_k$ holds in a different position, we can express bounded variability by requiring that the timestamp of the next $(v+1)$-th position in the future be greater than $V$ with respect to the current position's (and note that $k \oplus v = k$):
\[
B_{v, V} \quad =\quad B_v \land \bigwedge_{1 \leq k \leq v} \boxMTL{}{p_k \limpl \untilMTL{> V}{\neg p_k, p_{k}}}\,.
\]
Thus, $\phi \limpl B_{v, V}$ is valid iff the answer to \BVd{v, V} for $\phi$ is \textsc{yes}.

The only problem with this reduction is that $B_{v, V}$ has size exponential in the size of the instance of \BVd{v, V} assuming a binary encoding of constants. 
Precisely, the blow-up occurs because $B_v$ has size polynomial in $v$, which is exponential in the size of a binary encoding of $v$.
Encoding the modulo-$v$ counter in binary (using $n = \lfloor \log_2 v \rfloor + 1$ propositions) would not help: while updates to the counter itself can be done with formulas of size polynomial in $n$, there is no easy way to express in MTL the fact that the timestamp of the ``next'' occurrence is greater than $V$ (with respect to the current position's) without enumerating all $2^n = v$ values for the counter.

\exOnly{
Let us illustrates the problem, assuming for simplicity, but without loss of generality, that $v = 2^n$ for some integer $n$.
Consider $n$ propositions $b_1, \ldots, b_n$ such that a $b_k$ represent the $k$-th bit of a counter spanning the $2^n$ values from $0^n$ to $1^n$; and $b_n$ is the most significant bit.
To simplify the notation, we write $\neg b_k$ as $\ob_k$, and string such as $b_n \cdots b_1$ represent propositional formulas such as $b_n \land \cdots \land b_1$.
From one position to the next, the counter gets incremented by one.
In binary, this is expressed as follows: starting from the least significant bit, flip all $1$s until you reach the first $0$; flip the $0$ as well, and leave all other more significant bits unchanged:
\[
\bigwedge_{1 \leq k \leq n} \!\!\!
\left(
\ob_k b_{k-1} \cdots b_1 
\limpl
\nextMTL{}{b_k \ob_{k-1} \cdots \ob_1}
\land\!\!\!\!\!
\bigwedge_{k < j \leq n} \!\!(b_j \liff \nextMTL{}{}\,b_j)
\!\!\right)
\]
plus the special case $b_n \cdots b_1 \limpl \ob_n \cdots \ob_1$ specified separately.
This formulas has size $\Orm(n^2)$, but expressing bounded variability also requires a formula:
\[
\boxMTL{}{x_n \cdots x_1 \limpl \boxMTL{> V}{x_n \cdots x_1}}
\]
for each of the $2^n$ values $x_n \cdots x_1$ of the bits $b_1, \ldots, b_n$.
}

The blow-up is, however, inessential and only due to the fact that MTL operators do not include compact ``counting'' modalities.
We omit the details for brevity, but it is clear that one can extend the standard decision procedures for MTL~\cite{AH93} to handle counting modalities without affecting the complexity of the logic.
Specifically, we could introduce an operator $\CntMTL{J}{n}\psi$ with the semantics: $\omega, k \models \CntMTL{J}{n}\psi$ iff $t_{k+n} - t_k \in J$ and $\omega, k+n \models \psi$.
$B_{v, V}$ for $v > 0$ is then equivalent to $\boxMTL{}{\CntMTL{> V}{v}\logtrue}$; assuming a binary encoding of constants, this has size linear in the size of the encodings of $v$ and $V$.
\iflncs\qed\else\qedhere\fi\end{proof}

\section{Bounded Variability in Simple Cases} \label{sec:bound-vari-simple}
The complexity results of Section~\ref{sec:compl-bound-vari} pose some major limitations to deciding bounded variability for \emph{generic} MTL formulas.
However, the outlook may be better if we target \emph{fragments} of MTL that are still sufficiently expressive but for which reasoning about bounded variability is simpler than in the general case.
We call such fragments ``bounded friendly''.
We give two examples of non-trivial bounded-friendly fragments, one for discrete and one for dense time.

\ifplain
\begin{definition}\label{def:bound-friendly}
An MTL fragment $\Fcal$ is \emph{bounded friendly} over $\timedomain$ iff three conditions hold:
\begin{enumerate}
\item \label{req1} We can express in $\Fcal$ a sufficient condition for bounded variability; that is, for any $v, V$, there exists a computable formula $B_{v,V} \in \Fcal$ such that all models of $B_{v,V}$ have variability bounded by $v/V$.
\newcounter{latestEnumCnt}
\setcounter{latestEnumCnt}{\theenumi}
\end{enumerate}
Given a generic MTL formula $\phi$, we can construct two formulae $\psi$ and $\phi'$ such that:\footnote{For consistency, assume all complexities are time complexities.}
\begin{enumerate}
\setcounter{enumi}{\thelatestEnumCnt}
\item \label{req1bis} $\phi$ is satisfiable iff $\psi \land \phi'$ is.
\item \label{req2} There exists a formula $\psi' \in \Fcal$ equivalent to $\psi \limpl B_{v,V}$ and constructable in $\Orm(b(|\phi|))$.
\item \label{req3} Deciding validity $\psi \limpl B_{v,V}$ is simpler than deciding bounded variability for $\phi$; that is, if validity for $\gamma \in \Fcal$ is decidable in $\Orm(f(|\gamma|))$, and $\psi \limpl B_{v,V}$ is constructable in , then $f(b(x))$ is $\orm(m(x))$, where $m(x)$ bounds the complexity of deciding \BVt{v,V}.
\end{enumerate} 
\end{definition}

For a bounded-friendly MTL fragment, we can proceed as follows.
Rewrite $\phi$ into $\psi \land \phi'$; construct $\psi \limpl B_{v,V}$ and determine if it is valid; if it is, then all models of $\phi$ have variability bounded by $v/V$, since $\phi \limpl \psi$, but we determined it with less computational resources than by analyzing $\phi$ directly.
The challenge in making this process practical is finding sufficiently expressive fragments $\Fcal$, which can represent a ``large part'' of $\phi$, as well as bounded variability itself.
The following subsections discuss non-trivial MTL fragments that are also bounded friendly over the integers (Section~\ref{sec:simpler-d}) and over the reals (Section~\ref{sec:simpler-c}).
\fi

\subsection{Simpler Bounded Variability over Discrete Time}\label{sec:simpler-d}

Over discrete time, MTL essentially boils down to an exponentially succinct version of LTL.
Therefore, we can try to lift some complexity results about simpler fragments of LTL~\cite{DBLP:journals/jacm/SistlaC85,DBLP:journals/iandc/DemriS02,DBLP:journals/corr/abs-0812-4848} to MTL over $\naturals$, and use them to identify bounded-friendly fragments.

Consider the two dual MTL fragments \mtlFX{} and \mtlGX{}: \mtlFX{} (respectively, \mtlGX{}) denotes the MTL fragment using only the \diamondMTL{J}{} (respectively, \boxMTL{J}{}) and \nextMTL{J}{} modalities (which we now regard as primitive), the propositional connectives $\land$ and $\lor$, and where negations only appear on atomic propositions.
Satisfiability for these fragments is decidable in exponential \emph{time}.

\begin{lemma}\label{lem:complexity-d-FX+}
Satisfiability of \mtlFX{} and of \mtlGX{} over $\naturals$ is \exptime-complete.
\end{lemma}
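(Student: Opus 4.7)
The plan is to prove \exptime membership and \exptime-hardness separately, focusing on \mtlFX{} and obtaining \mtlGX{} by duality: pushing a negation inward through an \mtlFX{}-formula yields an \mtlGX{}-formula and vice versa, so validity of one fragment is the complement of satisfiability of the other, and since \exptime is closed under complement the bounds transfer cleanly.

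For the upper bound, I would build from an \mtlFX{}-formula $\phi$ of size $n$ a nondeterministic automaton on timed words over $\naturals$ whose language is exactly the set of models of $\phi$. Its states track subformula obligations together with residues of timestamps modulo the binary-encoded constants occurring in $\phi$; the residue tracking blows the state space up to $2^{\Orm(n)}$. Because \mtlFX{} only has the reachability-style modalities $\diamondMTL{J}{}$ and $\nextMTL{J}{}$ and permits negation only on atoms, the automaton is \emph{weak}, so its nonemptiness is decidable in time polynomial in its size, yielding an overall \exptime procedure. An alternative route is to tableau-unfold $\phi$ into an exponentially larger LTL(F, X)-formula and then invoke a small-model property for positive LTL(F, X) to decide satisfiability in time polynomial in the unfolded size.

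For the lower bound, I would reduce from an \exptime-complete problem, such as acceptance for an alternating polynomial-space Turing machine (using $\mathrm{APSPACE} = \exptime$). The reduction encodes successive ATM configurations as segments of the timed word whose boundaries are controlled by $\nextMTL{=k}{}$ with binary-encoded $k$; universal branches of the ATM are mimicked by $\land$, existential branches by $\lor$, and $\diamondMTL{J}{}$ asserts the existence of witnesses at specific distances. The binary encoding of metric constants is essential: it compresses exponentially-long computations into polynomial-size formulas. The main obstacle is that expressing step-by-step consistency of the ATM computation naturally wants a universal $\boxMTL{}{}$ unavailable in \mtlFX{}, so the reduction must either route the universal constraint through \mtlGX{} and invoke duality, or encode correctness in such a way that \emph{absence} of a disjunctive ``faulty'' witness in the positive formula is enough---a delicate matching of computational structure to the positive fragment that I expect to be the hardest step.
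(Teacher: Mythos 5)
Your route is genuinely different from the paper's, and one half of it has a real gap. The paper does not reason about automata or alternating Turing machines at all: it defines the positive LTL fragment \ltlFX{}, cites Sistla--Clarke for its \np-completeness, exhibits an equisatisfiable translation from \mtlFX{} into \ltlFX{} that unrolls every binary-encoded constant into a chain of nested next operators (hence an exponential blow-up), and concludes both bounds at once as an instance of the succinctness phenomenon; \mtlGX{} is then handled by duality exactly as you propose. Your primary membership argument (an exponential-size weak automaton tracking pending eventualities and elapsed-time counters, followed by an emptiness check polynomial in the automaton's size) is a legitimate alternative and, if the bookkeeping is worked out, would give a deterministic exponential-time upper bound directly. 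Your fallback version of the upper bound, however, is wrong as stated: satisfiability of positive \ltlFX{} is not decidable ``in time polynomial in the unfolded size''---it already contains propositional satisfiability over $\land$, $\lor$ and literals, so it is \np-hard---and an exponential unfolding followed by an \np\ procedure only places the problem in nondeterministic exponential time, which is weaker than the claimed bound.

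The genuine gap is in your lower bound. You correctly identify the crux---\mtlFX{} has no universal modality and admits negation only on atoms, so ``every ATM configuration is a legal successor of the previous one'' cannot be stated as an invariant---but neither of your proposed escapes closes it. Routing the consistency constraint through \mtlGX{} produces a conjunction lying in neither fragment, and duality only exchanges satisfiability of one fragment for validity (equivalently, unsatisfiability of the negated formula) of the other; it does not let a satisfiability instance of \mtlFX{} carry a $\boxMTL{}{}$-conjunct. Likewise, ``absence of a faulty witness'' is a validity question for an \mtlFX{} formula, i.e., the complement of the problem you are trying to prove hard, so it cannot serve as the reduction target. As written, then, the hardness half is not a proof. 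Note that the paper never encodes machine computations in the positive fragment either: its hardness is inherited from the hardness of the unrolled fragment \ltlFX{} combined with the exponential compression afforded by binary metric constants---precisely the mechanism that sidesteps the obstacle you ran into.
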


\iflncs
\begin{proof}
See Appendix~\ref{app:simpler-bounded-proof}.
\iflncs\qed\else\qedhere\fi\end{proof}
\else
\begin{proof}\iflncs[of Lemma~\ref{lem:complexity-d-FX+}]$\:$\fi
Consider the LTL fragment \ltlFX{} which only uses the eventually and next LTL modalities, the propositional connectives $\land$ and $\lor$, and where negations only appear on atomic propositions; \cite[Th.~3.7]{DBLP:journals/jacm/SistlaC85} proves that satisfiability for \ltlFX{} is \np-complete.
We outline how to transform a generic $\mu \in \mtlFX{}$ into a $\lambda \in \ltlFX{}$ such that $\mu$ and $\lambda$ are equisatisfiable; the converse transformation is also derivable along the same lines.
In general, the size of $\lambda$ is exponential in the size of $\mu$ due to the fact that metric constraints are encoded in binary in $\mu$.
The lemma follows as a manifestation of the ``succinctness phenomenon'' \cite[Chap.~20]{Pap94}.

The models of \ltlFX{} are denumerable sequences $w = w_0\, w_1\cdots$ such that $w_k$ is the set of propositions that hold at step $k$.
We conventionally assume that a step in $w$ corresponds to one discrete time instant; thus, a generic $\naturals$-timed word $\omega = (\sigma_0, t_0)\,(\sigma_1, t_1)\cdots$ uniquely corresponds to a sequence $w = w_0\,w_1\cdots$ such that: for all $k \in \naturals$, $w_{t_k} = \sigma_{k}$; and, for all $h$'s that are not timestamps of $\omega$, $w_h = \{\noop\}$, where $\epsilon$ is a special proposition denoting absence of a reading.

We define a translation $\tau$ from $\mtlFX{}$ to $\ltlFX{}$ inductively as follows, for $a, b \in \naturals$ and $c \in \naturals \cup \{\infty\}$:
\[
\begin{array}{lcl}
\tau\left(\diamondMTL{[a, b]}{\pi}\right)
& = &
\Xmtl{a}(\pi_\tau \vee \overbrace{\Xmtl{}(\pi_\tau \vee \cdots)}^{b-a \text{ nested } \Xmtl{}s})
\\
\tau\left(\diamondMTL{[a, \infty)}{\pi}\right)
& = &
\Xmtl{a}\Fmtl{}(\pi_\tau)
\\
\tau\left(\nextMTL{[a, c]}{\pi}\right)
& = &
\overbrace{\Xmtl{}(\epsilon \land \Xmtl{}(\epsilon \land \cdots))}^{a-1 \text{ nested }\Xmtl{}s}
\:\land\:
\tau\!\left(\diamondMTL{[a,c]}{\pi}\right)
\end{array}
\]
where $\tau(\pi) = \pi_\tau$ and $\Xmtl{k}$ is a shorthand for $k$ nested applications of $\Xmtl{}$.
$\tau$ does not otherwise change the propositional structure of formulas.

It should be clear that a generic $\mu$ is satisfiable over timed words over $\naturals$ iff $\tau(\mu)$ is satisfiable: models of $\tau(\mu)$ are obtained from models of $\mu$ according to the mapping of timed words described above.
Furthermore, the size of $\tau(\mu)$ is $\Orm(2^{|\mu|})$, since $\tau$ ``unrolls'' the constants succinctly represented in $\mu$, which results in a possible exponential blow-up.
This establishes the lemma for \mtlFX{}.
The same complexity result for \mtlGX{} follows by duality of $\boxMTL{}{}$ and $\diamondMTL{}{}$.
\end{proof}


\fi

We can leverage Lemma~\ref{lem:complexity-d-FX+} to show that \mtlFX{} is \emph{bounded friendly}.
Let $v, V$ be any variability bounds, with $v > 0$ w.l.o.g.
First, note that the equivalent \mtlGX{} formulas $\boxMTL{}{}\boxMTL{(0, \nu]}{\logfalse}$ and $\boxMTL{}{}\nextMTL{> \nu}{\logtrue}$, for $\nu = \lceil V/v \rceil$, hold only for models with variability bounded by $v/V$ (specifically, they are stricter than the definition of bounded variability).
Consider now a generic MTL formula $\phi$ written as $\phi' \land \psi$, where $\psi \in \mtlFX{}$.
The implication $\psi \limpl B_{v, V} \equiv \neg \psi \vee B_{v, V}$, where $B_{v, V}$ is one of the two just defined \mtlGX{} formulas implying bounded variability, is an \mtlGX{} formula: push in the outermost negation $\neg \psi$, and use the duality between $\diamondMTL{}{}$ and $\boxMTL{}{}$.
The validity of $\psi \limpl B_{v, V}$ can thus be decided in singly exponential time (Lemma~\ref{lem:complexity-d-FX+}), as opposed to \BVd{v, V} or the validity of $\phi$ which are \expspace-complete: solving them for a generic $\phi$ takes \emph{time} doubly exponential in $|\phi|$.
We can thus decide whether $\psi \limpl B_{v, V}$ is valid in singly exponential time; if it is, $\phi$ has bounded variability a fortiori, and hence we can study its validity with the simplified algorithms~\cite{FS11-TIME11,FS12-TIME12}.

We can show by duality that \mtlFX{} is also bounded friendly; for example, instead of the validity of $\psi \limpl B_{v,V}$, we equivalently consider the unsatisfiability of $\psi \land \neg B_{v, V}$.





\subsection{Simpler Bounded Variability over Continuous Time}\label{sec:simpler-c}
While MTL is highly undecidable over dense time, a number of expressive yet decidable fragments thereof have been identified.
MITL is the fragment of MTL where intervals are non-punctual, that is non-singular; MITL is fully decidable with \expspace-complete complexity~\cite{AFH96,HR04}.
More recently, other decidable expressive fragments have been identified that allow singular intervals~\cite{DBLP:conf/formats/OuaknineW08}; BMTL and SMTL, in particular, are interesting because their expressive power is incomparable with MITL's.

From the point of view of deciding bounded variability, however, MITL remains the most suitable choice.
SMTL validity has a non-elementary decision problem; while this is still better than the undecidable \BVc{v, V}, it makes it intractable in practice.
BMTL validity, in contrast, is decidable in \expspace; however, BMTL cannot express invariance properties since only finite intervals are allowed, and it is clear that bounded variability is a form of invariance property since it involves whole timed words.

Let us show that MITL is bounded friendly.
Let $v, V$ be any variability bounds, with $v > 0$ w.l.o.g. (the limit case $v=0$ can be handled separately).
Note that the MITL formula $B_{v, V} = \boxMTL{}{}\boxMTL{(0, \nu]}{\logfalse}$, for $\nu = \lceil V/v \rceil$, subsumes variability bounded by $v/V$.
Consider now a generic MTL formula $\phi$ written as $\phi' \land \psi$, where $\psi$ is an MITL formula.
The implication $\psi \limpl B_{v, V}$ obviously also is an MITL formula.
The validity of $\psi \limpl B_{v, V}$ is thus decidable; if $\psi \limpl B_{v, V}$ is valid, $\phi$ has bounded variability a fortiori, and hence its validity is decidable~\cite{FR08-FORMATS08}.

As a final remark, notice how leveraging MITL's bounded friendliness can still be useful to determine the satisfiability of MTL specifications not entirely expressed in MITL: as long as the part $\psi$ expressible in MITL entails bounded variability, the rest $\phi'$ of the actual specification can use any MTL operator, including singular intervals.

\section{Discussion: Other MTL Semantics} \label{sec:other-semantics}
Remark~\ref{rem:semantics-used} clarified that the results of this paper assume infinite timed words, and the continuous semantics over dense time.
While these are perfectly common assumptions (and the naturalness of the pointwise semantics over dense time has been questioned~\cite{HR04}), it is still interesting to get an idea of how our results would change under a different semantics.

Over discrete time, it is straightforward to notice that all complexity results proved in \ref{sec:compl-bound-vari-dt} carry over to the finite-word semantics (where decidability also has the same \expspace\ complexity).
Over dense time, it is possible to extend the results of Section~\ref{sec:compl-bound-vari-ct} to the \emph{signal} semantics of \cite{AFH96,FR10-TOCL10}---which can be seen as a variant of the continuous semantics---by reusing some of the constructions and definitions of~\cite{FR08-FORMATS08}.

The situation over dense timed words (both finite and infinite) under the pointwise semantics (and no past operators) is different.
For both finite and infinite words, \BVc{v, V} is no more difficult than validity, because one can encode the bounded variability requirement as in Lemma~\ref{lm:d-bounded-in}.
Therefore, \BVc{v, V} is decidable (nonprimitive recursive) over finite words~\cite{DBLP:journals/lmcs/OuaknineW07}, and is $\RE$ over infinite words~\cite{OuaknineW06}.
One can prove matching lower bounds along the lines of the proof of Lemma~\ref{lm:d-bounded-hard}: 
the abstraction of clock valuations into clock regions used in the construction of~\cite{DBLP:journals/lmcs/OuaknineW07} is such that the time difference between any pair of consecutive timestamps in any word that satisfies a formula $\phi$ is bounded above by a finite constant $\delta_{\phi}$ that depends only on $\phi$.
Therefore, $\cBVc{0, \delta_\phi}$ for $\phi$ reduces to non-satisfiability of $\phi$, giving the matching lower bounds through complement.
Finally, \BVc{} is $\RE$ under the pointwise semantics: by definition of $\RE$ as existential quantification over a recursive relation (finite words); and by dovetailing through the possible values $v, V$ and enumerating $\BVc{v, V}$ for them (infinite words).
Finding matching lower bounds belongs to future work, which may exploit the connection between MTL over infinite words under the pointwise semantics and channel machines with insertion errors~\cite{OuaknineW06}.

\iflncs
\else
\section{Conclusions}
The strong negative results of the paper need not be the deathblow to leveraging bounded variability to simplify temporal reasoning.
From a broader perspective, we can still look at the glass as half-full: while deciding bounded variability is intractable in general, there are situations where the physical requirements of a system include a notion of \emph{finite speed}, which bounded variability naturally embodies.
For such systems, there is still hope of using the expressiveness of MTL without succumbing to the dark side of intractability.
\fi

\fakepar{Acknowledgements.}
Thanks to the anonymous reviewers of TIME 2014 for useful comments; in particular, reviewer~4's insightful observations helped simplify some of the proofs.

\ifplain
\fi
\ifieee
\fi
\iflncs
\fi


\iflncs
\clearpage
\newpage
\appendix

\section{Bounded and Unbounded Counter Problems: Complexity}
\label{app:bounded-unbounded-counter-problems-proof}

\section{Simpler Bounded Variability over Discrete Time}
\label{app:simpler-bounded-proof}

\section{Alternative Proof of Lemma~\ref{lm:c-bounded-hard}}

\begin{proof}[Alternative proof of Lemma~\ref{lm:c-bounded-hard}]
We reduce the bounded counter problem (Section~\ref{sec:bound-unbo-count}) of $n$-counter machines to \cBVc{v, V}; the lemma follows by Theorem~\ref{th:counter-problems} through complement problems.

Consider a generic $n$-counter machine $M$.
To decide $M$'s bounded counter problem for $\beta$, we first build an $(n+1)$-counter machine $M'$ with a fresh counter $v_0$ and every other counter $v_k$ in $M$ renamed to $v_{k+1}$, for $k = 0, \ldots, n-1$.
Every instruction \lstinline|inc $v_k$| in $M$ becomes an extended increment instruction in $M'$ with the following semantics: 
\begin{equation}\label{eq:guarded-inc}
\begin{split}
  &\text{\lstinline|inc| }v_{k+1} \\
  &\text{\lstinline|if| } v_{k+1} > \beta \text{ \lstinline|then| } v_0 := n \beta + 1
\end{split}
\end{equation}
where $v_{k+1}$ is now referenced because of the renaming.
It is clear that we can implement \eqref{eq:guarded-inc} using the simpler instructions of $n$-counter machines.
With these modifications, we can see that the sum of all $n+1$ counters (including $v_0$) overflows $(n+1)\beta +1$ iff $v_0$ overflows $n\beta$ in $M'$ iff some counter overflows $\beta$ in $M'$ iff it overflows $\beta$ in $M$.

We then construct an MTL formula $\Gamma_{M'}$ that encodes the computations of $M'$ along the lines of Section~\ref{sec:mtl-counter-machines}, but with one modification: the models of $\Gamma_{M'}$ are such that no event occurs over any time interval $[2h+1, 2h+2)$, for $h\in\naturals$; in other words, intervals corresponding to consecutive configurations are padded by empty intervals.
On intervals with even left endpoints, the models of $\Gamma_{M'}$ are as in Section~\ref{sec:mtl-counter-machines}: over any time interval $[2h, 2h+1)$ for $h\in\naturals$, exactly one proposition $p_k$ holds for some $k$, and the total number of occurrences of $z_0, z_1, \ldots, z_n$ equals the sum of all counters at the $h$-th configuration reached during the corresponding computations of $M'$.
Consider now a generic unit time interval $J = [t, t+1]$ for $t \in \reals_{\geq 0}$.
Thanks to the padding, $J$ includes at most as many events as the sum of the counters in one configuration, plus exactly one $p_k$.
Thus, any one such $J$ includes more than $(n+1)\beta + 2$ events iff some counter overflows $\beta$ in some computation of $M$.
But the first condition is equivalent to $\Gamma_{M'}$ having some model with variability not bounded by $v/1$ for $v = (n+1)\beta + 2$, which is an instance of \cBVc{v, V}.

Now we have only established whether \emph{some} counter overflows in $M$, whereas the bounded counter problem specifically targets overflows of $v_0$.
To close the gap, we encode the overflowing of $v_1$ in $M'$ (corresponding to $v_0$ in $M$) as an MTL formula $\Xi_\beta^{v_1}$:
\[
\diamondMTL{}{}\left(\! \left(\bigvee_{0 \leq k \leq m}p_k \right) \land \overbrace{\diamondMTL{(0, 1)}{z_1 \land \diamondMTL{(0, 1)}{z_1 \land \cdots}}}^{\beta+1 \text{ nested diamonds}} \right) \,.
\]
Thanks to the padding, the nested diamonds evaluate to true iff there are at least $\beta+1$ distinct occurrences of $z_1$ in the slot corresponding to one configuration.
Thus, $v_0$ overflows $\beta$ in $M$ iff $\Gamma_{M'} \land \Xi_\beta^{v_1}$ has some model with variability not bounded by $v/1$, for $v$ defined above.
\iflncs\qed\else\qedhere\fi\end{proof}

\fi

\end{document}
